\newcolumntype{C}[1]{>{\centering\let\newline\\\arraybackslash\hspace{0pt}}m{#1}}
\newtheorem{defn}{\noindent $\mathbf{Definition}$}[section]
\newtheorem{theorem}[defn]{$\mathbf{Theorem}$}
\title{Spherical Density-Equalizing Map for Genus-0 Closed Surfaces}
\author{Zhiyuan Lyu\thanks{Department of Mathematics, The Chinese University of Hong Kong
  ({zylyu@math.cuhk.edu.hk}).}
\and Lok Ming Lui\thanks{Department of Mathematics, The Chinese University of Hong Kong
  ({lmlui@math.cuhk.edu.hk}).}
\and Gary P. T. Choi\thanks{Department of Mathematics, The Chinese University of Hong Kong
  ({ptchoi@cuhk.edu.hk}).}}
\date{}
\begin{document}

\maketitle
\begin{abstract}
Density-equalizing maps are a class of mapping methods in which the shape deformation is driven by prescribed density information. In recent years, they have been widely used for data visualization on planar domains and planar parameterization of open surfaces. However, the theory and computation of density-equalizing maps for closed surfaces are much less explored. In this work, we develop a novel method for computing spherical density-equalizing maps for genus-0 closed surfaces. Specifically, we first compute a conformal parameterization of the given genus-0 closed surface onto the unit sphere. Then, we perform density equalization on the spherical domain based on the given density information to achieve a spherical density-equalizing map. The bijectivity of the mapping is guaranteed using quasi-conformal theory. We further propose a method for incorporating the harmonic energy and landmark constraints into our formulation to achieve landmark-aligned spherical density-equalizing maps balancing different distortion measures. Using the proposed methods, a large variety of spherical parameterizations can be achieved. Applications to surface registration, remeshing, and data visualization are presented to demonstrate the effectiveness of our methods. 
\end{abstract}

\section{Introduction}

Surface parameterization is the process of mapping a complicated surface onto a simple standardized domain. For genus-0 closed surfaces, it is common to consider parameterizing them onto the unit sphere. There has been a vast number of works on the computation of spherical conformal parameterizations~\cite{gu2004genus,chen2013ricci,crane2013robust,choi2015flash,choi2016spherical,yueh2017efficient,choi2020parallelizable,liao2022convergence}, which preserve angles under the mappings. There have also been some works on the computation of spherical area-preserving parameterization~\cite{angenent2000area,zou2011authalic,su2013area,pumarola20193dpeople,cui2019spherical,yueh2019novel}. Some other methods have considered the problem of achieving a balance between various distortion measures for spherical parameterization~\cite{lui2007landmark,nadeem2016spherical,choi2016fast,wang2016bijective,wang2018novel,lyu2023two}.

Density-equalizing map~\cite{gastner2004diffusion} is a widely used approach for producing cartograms, for which a planar map is deformed based on the physical principle of density diffusion. Under a density-equalizing map, regions with a larger prescribed quantity (called the \emph{population}) expand and those with a smaller population shrink, so that the area ratio of different regions in the resulting deformed map reflects the ratio of the input population. In recent years, this method has been extensively applied for the visualization of sociological and biological data~\cite{tobler2004thirty,wake2008we,glynn2010breast,pratt2012implications,matthews2014national,nusrat2016state,dodd2016global}. Some recent efforts have been made on improving the computation of density-equalizing maps on 2D or 3D grids~\cite{gastner2018fast,li2018diffusion,li2019visualization,choi2021volumetric}. More recently, several surface parameterization methods for open surfaces have been proposed using the idea of density-equalizing maps~\cite{choi2018density,choi2020area,lyu2023bijective}. However, since genus-0 closed surfaces are topologically equivalent to the sphere but not the plane, the existing planar mapping formulations are not suitable for handling such surfaces. While it is possible to use different projection methods to map planar results onto the sphere, such projections will unavoidably introduce distortions in area (e.g., the stereographic projection), angle (e.g., the Lambert equal-area projection), or both (e.g., the Mercator projection), thereby leading to inaccuracies in the final spherical density-equalizing maps. It is more natural to skip the planar parameterization and directly compute density-equalizing maps on the sphere. Also, one may want to combine density-equalization and other surface mapping criteria to achieve different desired spherical mapping effects.

\begin{figure}[t]
    \centering
    \includegraphics[width=\textwidth]{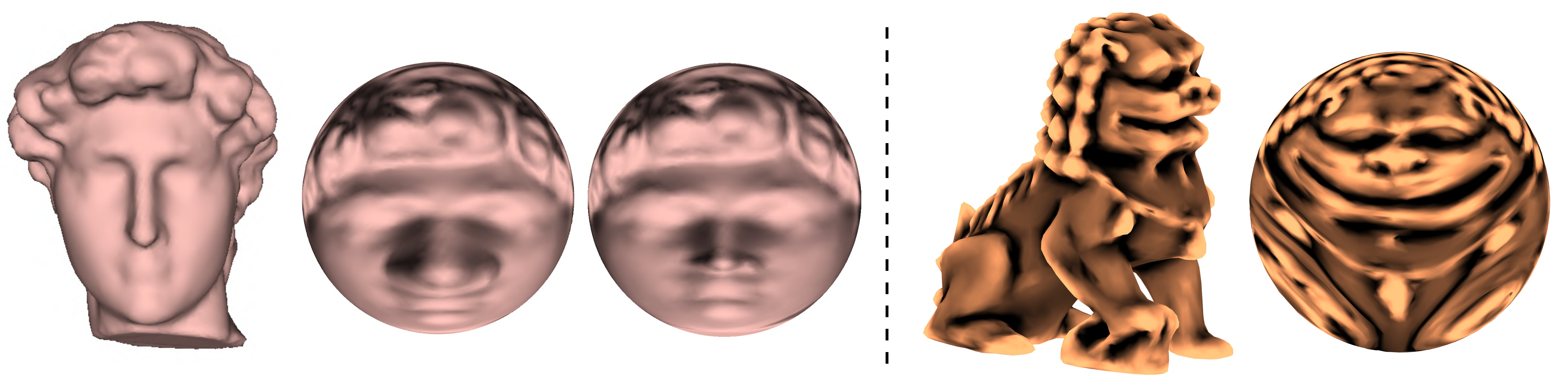}
    \caption{\textbf{Examples of the spherical density-equalizing maps obtained by our SDEM method.} (Left) The David model and two spherical density-equalizing maps obtained by our SDEM method, with the nose region enlarged or shrunk based on the prescribed population. (Right) The Chinese Lion model and the spherical area-preserving parameterization obtained by our SDEM method.}
    \label{fig:illustration}
\end{figure}

In this work, we first develop a novel method for computing bijective spherical density-equalizing maps (abbreviated as SDEM) for genus-0 closed surfaces (see Fig.~\ref{fig:illustration} for examples), in which the bijectivity of the mappings is ensured using quasi-conformal theory. Then, we propose an energy minimization model combining density-equalization, conformality, and landmark-matching conditions for producing landmark-aligned spherical density-equalizing maps (abbreviated as LSDEM) that balance between different distortion measures. We apply our proposed methods for different mapping problems and practical applications to demonstrate their effectiveness.

The rest of this paper is organized as follows. In Section~\ref{sec:background}, we review the mathematical background of this work. In Section~\ref{sec:main}, we describe our two proposed methods for spherical density-equalizing maps and landmark-aligned spherical density-equalizing maps in detail. In Section~\ref{sect:experiment}, we present numerical experiments to assess the performance of our proposed methods. In Section~\ref{sec:applications}, we introduce the applications of our methods to surface registration, surface remeshing, and spherical data visualization. We conclude our work and discuss possible future directions in Section~\ref{sect:discussion}.

\section{Mathematical background}\label{sec:background}
In this section, we describe some basic theories of density-equalizing maps and quasi-conformal geometry.

\begin{figure}[t]
   \centering
   \includegraphics[width=\textwidth]{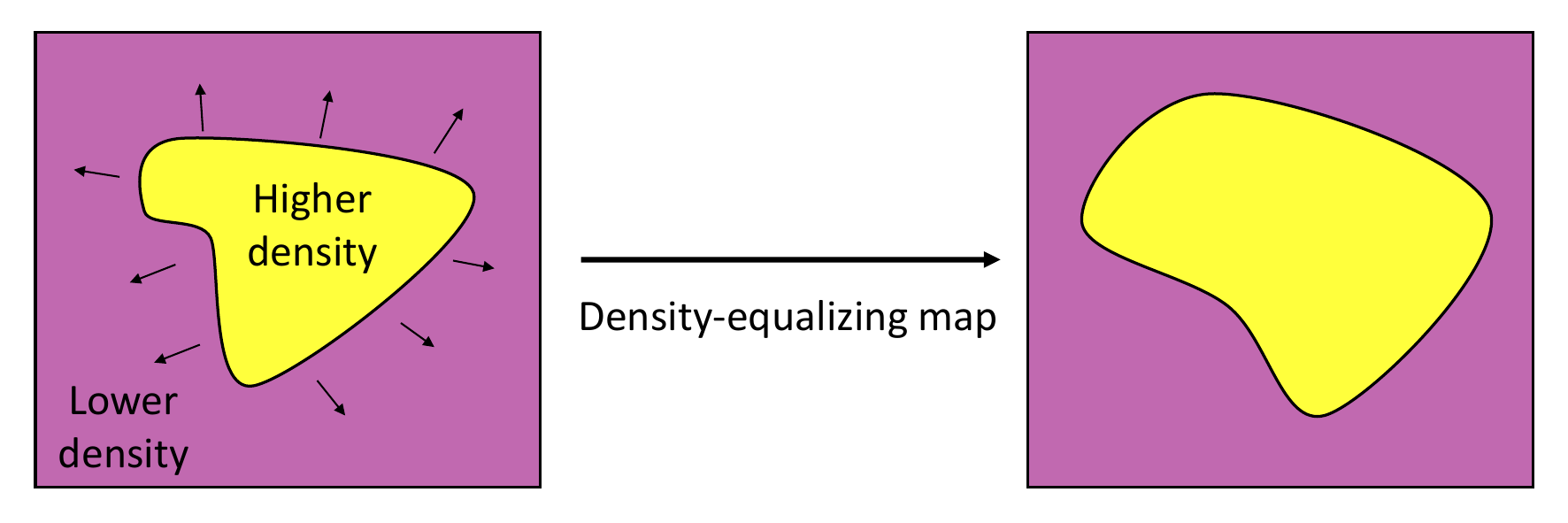}
    \caption{\textbf{An illustration of density-equalizing maps.} During the diffusion process, the regions with high density will be enlarged and the regions with low density will be shrunk. }
    \label{fig:DEM_illustration}
\end{figure}

\subsection{Density-equalizing maps}
Gastner and Newman~\cite{gastner2004diffusion} proposed a method for computing the \emph{density-equalizing maps} based on the diffusion process. Given a 2D planar domain, a positive density function $\rho$ is first defined on every unit area of it. The mapping method aims to deform the domain according to $\rho$ to obtain a density-equalizing domain. Moreover, during the deformation process, the regions with high density will be enlarged and the regions with low density will be shrunk. The advection equation is given by
\begin{equation}\label{advection}
    \frac{\partial \rho}{\partial t} = - \nabla \cdot \mathbf{j},
\end{equation}
where $\mathbf{j} = - \nabla \rho$ is the flux by Fick's law of diffusion. Hence, the diffusion equation can be obtained by
\begin{equation}\label{diffusion-eq}
    \frac{\partial \rho}{\partial t} = \Delta \rho.
\end{equation}
The expression of the velocity field in terms of density is 
\begin{equation}
    \mathbf{v} = -\frac{\nabla \rho}{\rho}.
\end{equation}
Combining the above formulas, the cumulative displacement $\mathbf{r}(t)$ of any point on the map at time $t$ can be calculated by integrating the velocity field:
\begin{equation}\label{displacement}
    \mathbf{r}(t) = \mathbf{r}(0) + \int^{t}_{0} \mathbf{v}(\mathbf{r},\tau) \mathrm{d\tau}.
\end{equation}
Note that the diffusion flow is always directed from high-density regions to low-density regions. In the limit $t \rightarrow \infty$, the density $\rho$ induced by the above displacement $\mathbf{r}(t)$ will be fully equalized per unit area. An illustration is given in Fig.~\ref{fig:DEM_illustration}.

The density-equalizing mapping method has been widely applied to cartogram creation and sociological data visualization. In recent years, Choi and Rycroft~\cite{choi2018density,choi2020area} proposed methods for computing the density-equalizing maps for general simply connected open surfaces in $\mathbb{R}^3$. More recently, Lyu, Choi, and Lui~\cite{lyu2023bijective} developed a method for computing density-equalizing quasiconformal (DEQ) maps to achieve a balance between density-equalization and quasiconformality for simply-connected and multiply-connected open surfaces.

\begin{figure}[t]
   \centering
   \includegraphics[width=\textwidth]{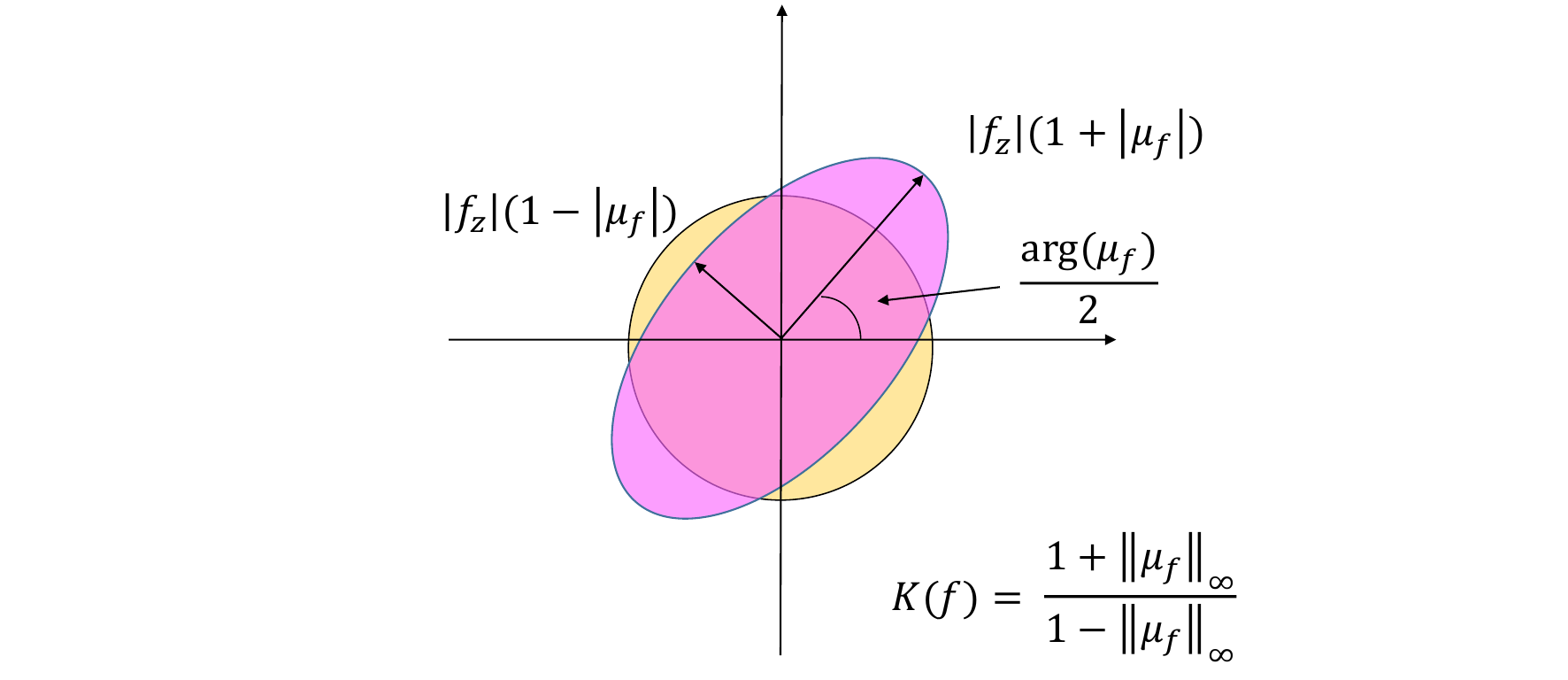}
    \caption{\textbf{An illustration of quasi-conformal maps.} Under a quasi-conformal map $f$, an infinitesimal circle is mapped to an infinitesimal ellipse with bounded eccentricity. The maximal magnification, maximal shrinkage, and maximal dilation are all related to the Beltrami coefficient $\mu$.}
    \label{fig:quasiconformal_map}
\end{figure}

\subsection{Quasi-conformal theory}
It is well-known that conformal maps preserve angles. \textit{Quasi-conformal maps}, as a generalization of conformal maps, relax the condition by allowing angle distortion within certain bounds. Mathematically, a mapping $f:\overline{\mathbb C} \rightarrow \overline{\mathbb C}$ is called a quasi-conformal map if it satisfies the Beltrami equation
\begin{equation}\label{eqt:Beltrami_eq}
    \frac{\partial f}{\partial \Bar{z}} = \mu(z) \frac{\partial f}{\partial z}
\end{equation}
for some complex-valued function $\mu$ with $\|\mu \|_{\infty}<1$. $\mu$ is called the \textit{Beltrami coefficient} of $f$, which measures the conformality distortion of the mapping $f$. In particular, if $\mu = 0$, then Eq.~\eqref{eqt:Beltrami_eq} becomes the Cauchy--Riemann equation, and hence the mapping $f$ is conformal. Intuitively, around a point $z_0 \in \mathbb C$, the first order approximation of $f$ can be expressed as:
\begin{equation}\label{eqt:first_order_approximation}
    f(z) \approx f(z_0) + f_{z}(z_0)(z-z_0) + f_{\Bar{z}}(z_0)\overline{(z-z_0)} = f(z_0) + f_{z}(z_0)(z-z_0 + \mu(z_0)\overline{(z-z_0)}).
\end{equation}
The above formula suggests that $f$ maps an infinitesimal circle centered at $z_0$ to an infinitesimal ellipse centered at $f(z_0)$. Additionally, we can determine the angles at which the magnification and shrinkage are maximized, as well as quantify the degree of magnification and shrinkage at those angles (see Fig.~\ref{fig:quasiconformal_map}). More specifically, the angle of maximal magnification is $\operatorname{arg}(\mu(z_0))/2$ with the magnifying factor $|f_{z}(z_0)|(1+|\mu(z_0)|)$, and the angle of maximal shrinkage is $(\operatorname{arg}(\mu(z_0))+\pi)/2$ with the shrinking factor $|f_{z}(z_0)(|1-|\mu(z_0)|)$. The maximal dilation of $f$ is $K(f) = \frac{1+\|\mu \|_{\infty}}{1-\|\mu\|_{\infty}}$. Thus, the Beltrami coefficient encodes important geometric information about the quasi-conformal map $f$.

Meanwhile, the Beltrami coefficient is closely related to the bijectivity of the quasi-conformal map. More specifically, we have the following result~\cite{lehto1973quasiconformal,ahlfors2006lectures}:
\begin{theorem}\label{qc_bijective}
    If $f$ is a $C^1$ mapping satisfying $\|\mu_f \|_{\infty}<1$, then $f$ is bijective. 
 \end{theorem}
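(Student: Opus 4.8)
The plan is to deduce the bijectivity of $f\colon\overline{\mathbb C}\to\overline{\mathbb C}$ (the ambient setting of the preceding definition; recall that $\overline{\mathbb C}$ is topologically the sphere $S^2$, which fits the genus-$0$ theme) from two ingredients: the bound $\|\mu_f\|_\infty<1$ makes $f$ an orientation-preserving local diffeomorphism, and the compactness of $S^2$ forces such a map to be a covering, hence a homeomorphism because $S^2$ is simply connected.

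First I would record the pointwise Jacobian identity $J_f=|f_z|^2-|f_{\bar z}|^2=|f_z|^2\bigl(1-|\mu_f|^2\bigr)$. For $\mu_f=f_{\bar z}/f_z$ to be defined with $|\mu_f|\le\|\mu_f\|_\infty<1$ everywhere, we need $f_z\neq 0$ throughout, and then $J_f>0$ at every point. Since $f\in C^1$, the inverse function theorem then shows that $f$ is an orientation-preserving local diffeomorphism.

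Next I would globalize this. A continuous map from the compact space $\overline{\mathbb C}$ to the Hausdorff space $\overline{\mathbb C}$ is closed, and a local homeomorphism is open, so $f(\overline{\mathbb C})$ is a nonempty clopen subset of the connected space $\overline{\mathbb C}$; hence $f$ is surjective. Each fibre $f^{-1}(w)$ is discrete (local homeomorphism) and closed in a compact space, hence finite; the standard evenly-covered-neighbourhood argument — using that $f$ is proper since $\overline{\mathbb C}$ is compact — shows the fibre cardinality is locally constant, hence a fixed integer $n$, so $f$ is an $n$-sheeted covering of $S^2$. Since $\pi_1(S^2)$ is trivial (equivalently, since an $n$-sheeted covering satisfies $\chi(S^2)=n\,\chi(S^2)$, forcing $2=2n$), we conclude $n=1$, i.e.\ $f$ is a bijection.

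I expect the middle step to be the real obstacle: local injectivity does \emph{not} imply global injectivity in general — for example $z\mapsto z^2$ is a local homeomorphism away from its critical points yet is two-to-one — so the argument genuinely needs the compactness of the sphere to supply properness and the topology of $S^2$ (trivial $\pi_1$, or nonzero $\chi$) to rule out more than one sheet. A more quasi-conformal-flavoured alternative, in the spirit of the cited references, is to invoke the measurable Riemann mapping theorem to obtain a quasi-conformal homeomorphism $g$ with $\mu_g=\mu_f$; the composition formula for Beltrami coefficients then gives $\mu_{f\circ g^{-1}}=0$, so $f\circ g^{-1}$ is a conformal local homeomorphism of $\overline{\mathbb C}$, necessarily a M\"{o}bius transformation, whence $f$ is a composition of bijections.
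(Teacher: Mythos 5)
Your proposal is correct, and it opens exactly as the paper does: both compute the Jacobian identity $J_f=|f_z|^2-|f_{\bar z}|^2=|f_z|^2\bigl(1-|\mu_f|^2\bigr)$ and conclude $J_f>0$ everywhere. The difference is in what happens next. The paper's proof stops there and simply asserts ``therefore $f$ is bijective,'' which --- as you yourself point out --- is a non sequitur in general: a positive Jacobian only makes $f$ an orientation-preserving local diffeomorphism, and local injectivity does not imply global injectivity. Your additional step (compactness of $\overline{\mathbb C}$ gives properness, a proper local homeomorphism is a finite-sheeted covering, and simple connectedness or the Euler characteristic of $S^2$ forces the sheet number to be $1$) is precisely the global topological input needed to close that gap, so your argument is strictly more complete than the one printed in the paper. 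Your alternative route via the measurable Riemann mapping theorem and the composition formula for Beltrami coefficients is also sound and is in fact closer to how the cited references (Lehto--Virtanen, Ahlfors) treat the statement; it buys generality (it works for merely measurable $\mu$, not just $C^1$ maps) at the cost of invoking a much heavier existence theorem. In short: no gap on your side --- if anything, you have repaired one in the paper's own proof.
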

 \begin{proof}
     For a given quasi-conformal map $f = u + i v$, the corresponding Jacobian matrix, denoted as $J_f$, is expressed as follows:
    \begin{equation}\label{Jacobian}
        \begin{aligned}
        J_f & =  u_{x}v_{y} - v_{x}u_{y} \\
            & = \frac{1}{4}\left( \left(u_x + v_y \right)^2 + \left(v_x - u_y \right)^2 - \left(u_x - v_y \right)^2 - \left(v_x + u_y \right)^2 \right) \\
            & = \frac{1}{4} \left( |\left(u_x + i v_x \right) - i \left(u_y + i v_y \right) |^2 - |\left(u_x + i v_x \right) + i \left(u_y + i v_y \right) |^2 \right) \\
            & =  |f_z|^2 - |f_{\Bar{z}}|^2 \\
            & =  |f_z|^2 (1 - |\mu|^2).
        \end{aligned}
    \end{equation}
    Thus, if $\|\mu \|_{\infty}<1$ and $|\frac{\partial f}{\partial z}| \neq 0$, then $J_f > 0$ everywhere. Therefore, $f$ is bijective. \hfill$\blacksquare$ 
 \end{proof}

Moreover, the Beltrami coefficient of a composition of two quasi-conformal maps can be expressed in the following way. Let $f,g: \overline{\mathbb{C}} \rightarrow \overline{\mathbb{C}}$ be two quasi-conformal maps with Beltrami coefficient $\mu_f$ and $\mu_g$, respectively. The Beltrami coefficient of the composition map $g\circ f$ is given by
\begin{equation}
    \mu_{g \circ f} = \dfrac{\mu_f + (\mu_{g}\circ f) \tau }{1 + \overline{\mu_f}(\mu_g \circ f) \tau},
\end{equation}
where $\tau = \overline{f_z} / f_z$. In particular, if $g$ is a conformal map, then $\mu_{g \circ f} = \mu_f$. In other words, composing a conformal map with a given quasi-conformal map will not change its Beltrami coefficient. This observation holds significant importance in our proposed algorithms as discussed in the following section.

\section{Proposed methods}\label{sec:main}
This section presents our main proposed methods in this work. Below, we first develop a novel algorithm for computing bijective spherical density-equalizing maps (abbreviated as SDEM) of genus-0 closed surfaces onto a unit sphere. Then, we propose another method for computing bijective landmark-aligned spherical density-equalizing maps (abbreviated as LSDEM) of genus-0 closed surfaces.

\subsection{Bijective spherical density-equalizing map (SDEM)}\label{sec:SDEM}
Let $\mathcal{M}$ be a genus-0 closed surface represented as a triangular mesh $(\mathcal{V}, \mathcal{E}, \mathcal{F})$, where $\mathcal{V}$ is the set of all vertices, $\mathcal{E}$ is the set of all edges, and $\mathcal{F}$ is the set of all triangular faces. Our goal is to compute a bijective spherical density-equalizing map $f:\mathcal{M} \to \mathbb{S}^2$ based on a prescribed population encoding the desired area changes.

\subsubsection{Initial spherical parameterization} \label{sect:sdem_initial}

First, we apply the FLASH method proposed by Choi, Lam, and Lui~\cite{choi2015flash} to compute an initial spherical conformal parameterization. FLASH is a north pole–south pole iterative scheme specifically designed to reduce the conformality distortion of the spherical conformal map using quasi-conformal theory. The process begins by selecting a triangle element from $\mathcal{F}$ as the north pole. Next, the method solves the Laplace equation on a triangular domain on the complex plane $\mathbb{C}$ to obtain a conformal map of the input surface with the selected triangle punctured. By employing the inverse stereographic projection, the planar region can then be mapped onto the sphere $\mathbb{S}^2$, and then the punctured element can be added back to the surface to get a spherical parameterization. However, the spherical mapping result often exhibits significant angle distortions near the north pole. To address this issue, the method projects the sphere onto the plane using the south-pole stereographic projection and composes the map with another quasi-conformal map to correct the conformality distortion. Finally, the updated map is projected back onto the sphere to yield the desired spherical conformal parameterization. Readers are referred to~\cite{choi2015flash} for more details.

One common issue of global conformal parameterizations is that the area may be largely distorted. In order to mitigate this problem, an additional step of composing the spherical conformal parameterization with an M\"obius transformation can be introduced, as discussed in~\cite{choi2020parallelizable}. Specifically, one can start by projecting the aforementioned spherical conformal parameterization result onto $\overline{\mathbb C}$ using the stereographic projection, and then apply an M\"obius transformation followed by the inverse stereographic projection back onto the sphere. By finding an optimal M\"obius transformation, the area distortion of the spherical parameterization can be reduced (see~\cite{choi2020parallelizable} for details). Moreover, as both the stereographic projection and M\"obius transformations are conformal, the resulting updated spherical parameterization is also conformal.

We denote the spherical conformal parameterization obtained using the above-mentioned procedures as $f_0: \mathcal{M} \rightarrow \mathbb{S}^2$.

\subsubsection{Density-equalizing map on the sphere}\label{sec:DEM_sphere}

Let $\mathbf{x}$ be a vertex on the sphere $\mathbb{S}^2$ with a positive quantity (the population) prescribed at it, giving a density $\rho$ on $\mathbb{S}^2$ defined as population per unit area. Below, we describe our proposed method for achieving density-equalizing maps on the sphere. Note that the density diffusion is a time-dependent process, and so here we denote the initial position of $\mathbf{x}$ as $\mathbf{x}(0)$. At time $t$, the updated position of the vertex is denoted as $\mathbf{x}(t)$.

\begin{figure}[t]
    \centering
    \includegraphics[width=\textwidth]{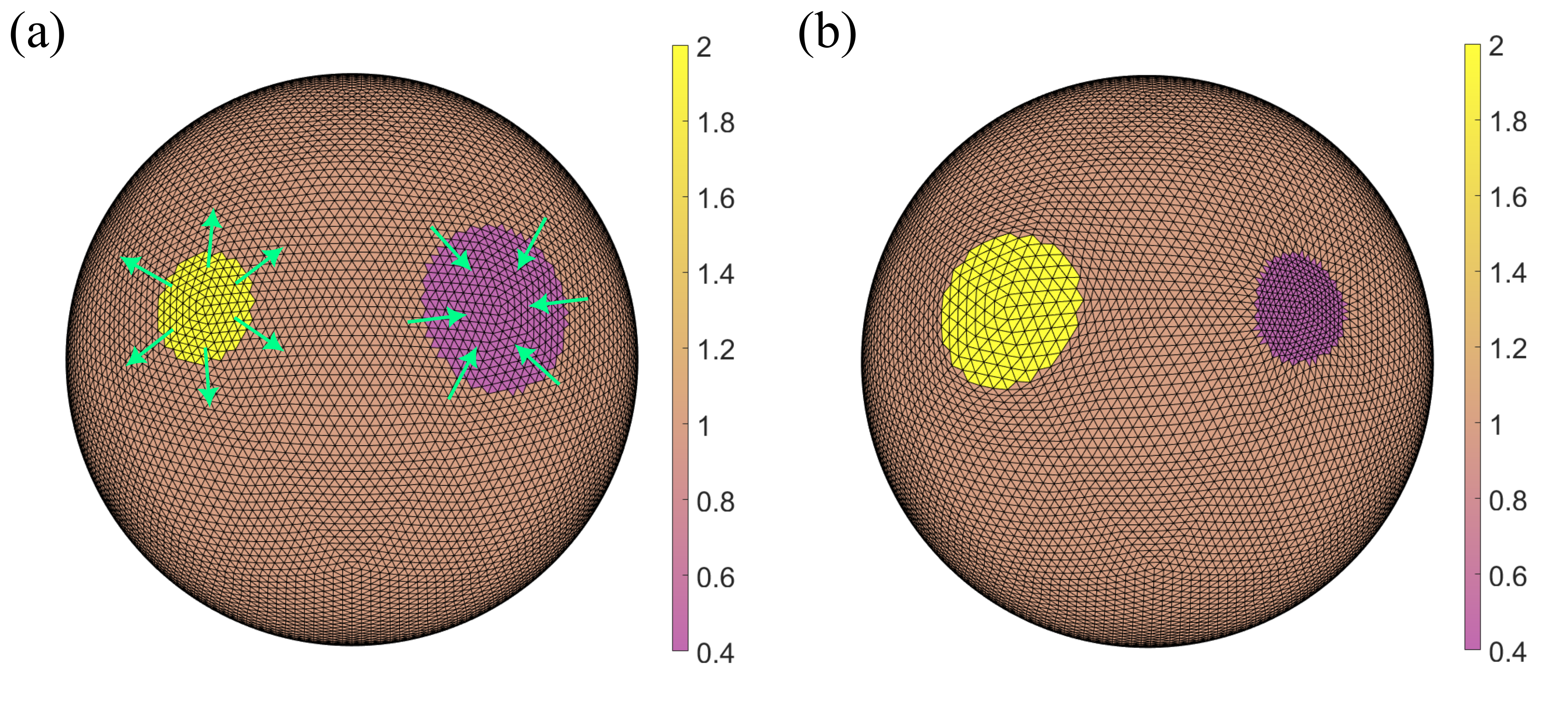}
    \caption{\textbf{Density-equalizing map on a sphere.} (a) A spherical surface color-coded with a prescribed density $\rho$. The diffusion of $\rho$ induces a density flux, thereby giving a velocity field on the sphere as indicated by the green arrows. (b) Under the spherical density-equalizing map, the region with a higher density (in yellow) expands and the region with a lower density (in purple) shrinks. }
    \label{fig:sphericalDEM_illustration}
\end{figure}

Considering the diffusion of $\rho$, we have the following diffusion equation:
\begin{equation}\label{eqt:diffusion}
\frac{\partial \rho}{\partial t} = \Delta \rho. 
\end{equation}
The density gradient induces a velocity field for the vertices (Fig.~\ref{fig:sphericalDEM_illustration}(a)):
\begin{equation}\label{eqt:velocity}
\mathbf{v}(\mathbf{x}(t),t) = - \frac{\nabla \rho(\mathbf{x}(t),t)}{\rho(\mathbf{x}(t),t)}.
\end{equation}
Now, note that the velocity field is determined by the prescribed populations and does not necessarily lie on the sphere. To ensure that the vertices remain on the sphere under the diffusion process, we perform a projection of $\mathbf{v}$ (Fig.~\ref{fig:sphericalDEM_illustration}(a)) as follows. First, we define the normal component of $\mathbf{v}(\mathbf{x}(t),t)$ at a point $\mathbf{x}(t) \in \mathbb{S}^2$ as
\begin{equation}
    \mathbf{v}^{\perp}(\mathbf{x}(t),t) = \left(\mathbf{v}(\mathbf{x}(t),t) \cdot \mathbf{n}(\mathbf{x}(t))\right) \mathbf{n}(\mathbf{x}(t)),
\end{equation}
where $\mathbf{n}$ is the outward unit normal on the sphere. Then, the projection of $\mathbf{v}$ onto the sphere is given by
\begin{equation}\label{eqt:velocity_projected}
    \widetilde{\mathbf{v}}(\mathbf{x},t) = \mathbf{v}(\mathbf{x},t) - \mathbf{v}^{\perp}(\mathbf{x},t).
\end{equation}
Now, every vertex on the sphere is displaced according to the following equation: 
\begin{equation}\label{eqt:displacement}
\mathbf{x}(t) = \mathbf{x}(0) + \int_0^t \widetilde{\mathbf{v}}(\mathbf{x}(\tau),\tau) d \tau.
\end{equation}
As $t \to \infty$, we obtain a spherical density-equalizing map with the final position of all vertices being $\mathbf{x}(\infty)$ (Fig.~\ref{fig:sphericalDEM_illustration}(b)). 

In the discrete case, we begin with the initial density $\rho^0_{\mathcal{F}} = \frac{\text{Population}}{\text{Triangle Area}}$ defined on each triangle element of the spherical mesh. The diffusion-based deformation is obtained by iteratively updating the positions of all vertices $\{\mathbf{x}_i(t_n)\}_i$ at time $t = t_n$, where $i = 1, 2, \cdots, |\mathcal{V}|$ and $n = 0, 1, \dots$, with a uniform timestep $\delta t = t_{n+1} - t_n$. Denote $\mathbf{v}^n_{\mathcal{V}}$ as a $|\mathcal{V}| \times 3$ matrix representing the velocity field at all vertices $\{\mathbf{x}_i(t_n)\}_i$ at time $t = t_n$. We also denote $\rho^n_{\mathcal{V}}$ and $\rho^n_{\mathcal{F}}$ respectively as a $|\mathcal{V}| \times 1$ column vector and a $|\mathcal{F}| \times 1$ column vector representing the density at every vertex or every face at time $t = t_n$. As described in Ref.~\cite{choi2018density}, $\rho^n_{\mathcal{F}}$ and $\rho^n_{\mathcal{V}}$ are related by a simple formula:
\begin{equation}\label{eqt:conversion}
    \rho^n_{\mathcal{V}} = M \rho^n_{\mathcal{F}},
\end{equation}
where $M$ is a $|\mathcal{V}| \times |\mathcal{F}|$ face-to-vertex conversion matrix with 
\begin{equation}\displaystyle
    M_{ij} =  \left\{\begin{array}{ll}
\frac{\text{Area}(T_j)}{\sum_{T \in N^{\mathcal{F}}(\mathbf{x}_i)}\text{Area}(T)} & \text{ if $T_j$ is incident to $\mathbf{x}_i$},\\
0 & \text{ otherwise.}
\end{array}\right.
\end{equation}
Here, $N^{\mathcal{F}}(\mathbf{x}_i)$ represents the set of all triangular faces incident to the vertex $\mathbf{x}_i$. Alternatively, one may also directly prescribe a vertex-based density $\rho^n_{\mathcal{V}}$ and skip Eq.~\eqref{eqt:conversion}.

To solve Eq.~\eqref{eqt:diffusion}, we discretize the Laplace--Beltrami operator at time $t = t_n$ as:
\begin{equation}
    \Delta_n = -A_{n}^{-1} L_n,
\end{equation}
where $A_{n}$ is a $|\mathcal{V}| \times |\mathcal{V}|$ diagonal matrix (called the lumped mass matrix) with 
\begin{equation}\label{eqt:A_n}
(A_{n})_{ii} = \frac{1}{3} \sum_{[\mathbf{x}_i,\mathbf{x}_j,\mathbf{x}_k] \in N^{\mathcal{F}}(\mathbf{x}_i)} \text{Area}([\mathbf{x}_i(t_n),\mathbf{x}_j(t_n),\mathbf{x}_k(t_n)]),
\end{equation}
where $[\mathbf{x}_i,\mathbf{x}_j,\mathbf{x}_k]$ is a triangle in $N^{\mathcal{F}}(\mathbf{x}_i)$. In other words, each diagonal entry of $A_{n}$ is the area sum of all triangles surrounding a vertex at time $t = t_n$. $L_{n}$ is the $|\mathcal{V}| \times |\mathcal{V}|$ mesh Laplacian matrix given by the cotangent formula~\cite{pinkall1993computing}:
\begin{equation}\label{eqt:L_n}
\displaystyle
(L_{n})_{ij} = \left\{\begin{array}{ll}
- \frac{1}{2}(\cot \alpha_{ij} + \cot \beta_{ij}) & \text{ if } [\mathbf{x}_i,\mathbf{x}_j] \in \mathcal{E},\\
- \sum_{k \neq i} (L_n)_{ik} & \text{ if } j = i,\\
0 & \text{ otherwise,}
\end{array}\right.
\end{equation}
where $\alpha_{ij}$ and $\beta_{ij}$ are the two angles opposite to the edge $[\mathbf{x}_i,\mathbf{x}_j]$ at time $t = t_n$. Eq.~\eqref{eqt:diffusion} can then be expressed as 
\begin{equation}\label{eqt:diffusion_discrete}
    \rho^{n+1}_{\mathcal{V}} = (A_{n} + \delta t L_{n})^{-1} (A_n \rho^{n}_{\mathcal{V}}).
\end{equation}
After solving Eq.~\eqref{eqt:diffusion_discrete}, we obtain the velocity field using Eq.~\eqref{eqt:velocity}. A discretization of the gradient $\nabla \rho$ for every triangular face $T = [\mathbf{x}_i,\mathbf{x}_j,\mathbf{x}_k] \in \mathcal{F}$ at time $t_n$ is given as follows (see Ref.~\cite{choi2018density} for details):
\begin{equation}
\nabla \rho^{n}_{\mathcal{F}}(T)  = \frac{\mathbf{n}^n_{\mathcal{F}}(T) \times \left(\rho^{n}_{\mathcal{V}}(\mathbf{x}_i) e_{jk} + \rho^{n}_{\mathcal{V}}(\mathbf{x}_j) e_{ki} + \rho^{n}_{\mathcal{V}}(\mathbf{x}_k) e_{ij}\right)}{2\text{Area}(T)},
\end{equation}
where $e_{ij}, e_{jk}, e_{ki}$ are the three directed edges $[\mathbf{x}_i(t_n),\mathbf{x}_j(t_n)]$, $[\mathbf{x}_j(t_n),\mathbf{x}_k(t_n)]$, $[\mathbf{x}_k(t_n),\mathbf{x}_i(t_n)]$, and $\mathbf{n}^n_{\mathcal{F}}(T)$ is the outward unit normal of the triangular face $T$. We can then easily convert $\nabla \rho^{n}_{\mathcal{F}}$ to $\nabla \rho^{n}_{\mathcal{V}}$ using the above-mentioned face-to-vertex conversion matrix $M$: 
\begin{equation}\label{eqt:gradrho_conversion}
    \nabla \rho^{n}_{\mathcal{V}} = M \nabla \rho^{n}_{\mathcal{F}}.
\end{equation}
Now, Eq.~\eqref{eqt:velocity} becomes
\begin{equation}\label{eqt:velocity_discrete}
    \mathbf{v}^n_{\mathcal{V}}(\mathbf{x}_i) = -\frac{\nabla \rho^{n}_{\mathcal{V}}(\mathbf{x}_i)}{\rho^{n}_{\mathcal{V}}(\mathbf{x}_i)}.
\end{equation}
The discretization of Eq.~\eqref{eqt:velocity_projected} then follows naturally:
\begin{equation} \label{eqt:velocity_projected_discrete}
\widetilde{\mathbf{v}}^n_{\mathcal{V}}(\mathbf{x}_i) = \mathbf{v}^n_{\mathcal{V}}(\mathbf{x}_i) - (\mathbf{v}^n_{\mathcal{V}}(\mathbf{x}_i) \cdot \mathbf{n}_{\mathcal{V}}(\mathbf{x}_i)) \mathbf{n}_{\mathcal{V}}(\mathbf{x}_i),
\end{equation}
where $\mathbf{n}_{\mathcal{V}}$ is the outward unit normal at the vertices. We can then update the position of all vertices $\mathbf{x}_i$ based on Eq.~\eqref{eqt:displacement}:
\begin{equation} \label{eqt:displacement_discrete}
\mathbf{x}_i(t_{n+1}) = \mathbf{x}_i(t_{n}) - \delta t \ \widetilde{\mathbf{v}}^n_{\mathcal{V}}(\mathbf{x}_i).
\end{equation}
While the projection of the velocity effectively eliminates the normal component, there may still be small numerical errors that cause $\mathbf{x}_i$ to move out of the spherical surface. This can be easily corrected by dividing $\mathbf{x}_i$ by its Euclidean 2-norm $\|\mathbf{x}_i\|_2$. Finally, as the number of iterations $n$ increases, the density is equalized over the entire sphere, i.e., all values of $\rho^{n}_{\mathcal{V}}$ converge to their average. 

\subsubsection{Enforcing the bijectivity of the mapping throughout the iterative process}\label{sect:sdem_overlap}
As discussed in~\cite{lyu2023bijective}, the bijectivity of the traditional density-equalizing mapping process is not guaranteed. Specifically, mesh overlaps may occur in the mapping result if the vertex displacement is too large. Analogously, in the spherical density-equalizing mapping process proposed in the above section, Eq.~\eqref{eqt:displacement_discrete} may cause some undesirable mesh overlaps under extremely large density gradients.

To resolve this issue, we propose an overlap correction scheme as follows. The main idea behind this scheme, as implied by Theorem~\ref{qc_bijective}, is to utilize quasi-conformal theory to rectify the fold-overs caused by the deformation in Eq.~\eqref{eqt:displacement_discrete}. For simplicity, we denote the initial conformal sphere as $\mathbb S^2_0$ and the resulting deformed sphere at the $n$-th iteration as $\mathbb S^2_n$. To correct the overlaps on $\mathbb S^2_n$, we first apply the stereographic projection to map both $\mathbb S^2_n$ and $\mathbb S^2_0$ onto the plane, resulting in two corresponding planar domains referred to as $D_n$ and $D_0$, respectively. 

Note that the stereographic projection requires choosing a specific part of the sphere as the north pole. While the projection is conformal in theory, the conformality distortion of the triangle elements under the projection may be affected by the choice of the pole. Here, to reduce the conformality distortion of the projection, we select the triangle with the most regular 1-ring neighborhood as the north pole. More specifically, we first define the face regularity $R^i_{\mathcal{F}}$ for each triangle face $T_i$ as follows:
\begin{equation}
R^i_{\mathcal{F}} = \sum^3_{j = 1} \left| \frac{e^i_j}{e^i_1 + e^i_2 + e^i_3} - \frac{1}{3} \right|,
\end{equation}
where $e^i_1, e^i_2, e^i_3$ represent the length of the three edges of $T_i$. Then, the vertex regularity can be defined by:
\begin{equation}
    R_{\mathcal{V}} = H^{\mathcal{F}\mathcal{V}} R_{\mathcal{F}},
\end{equation}
Here $R_{\mathcal{V}}$ is a $|\mathcal{V}|\times 1$ matrix with the $j$-th entry $R_{\mathcal{V}}^j$ being the vertex regularity of the $j$-th vertex , $R_{\mathcal{F}}$ is a $|\mathcal{F}|\times 1$ face regularity matrix, and $H^{\mathcal{F}\mathcal{V}}$ is a $|\mathcal{V}|\times |\mathcal{F}|$ sparse matrix such that
\begin{equation}\displaystyle
    H^{\mathcal{F}\mathcal{V}}_{ij} =  \left\{\begin{array}{ll}
    \dfrac{1}{3} & \text{ if $T_j$ contains the $i$-th vertex},\\
0 & \text{ otherwise.}
\end{array}\right.
\end{equation}
Finally, we determine the 1-ring face regularity $\overline{R}^i$ for each triangle $T_i$ by calculating the average vertex regularity of the vertices in its 1-ring neighborhood:
\begin{equation}
\overline{R}^i = \frac{1}{3} \sum_{\mathbf{x}_j \in T_i} R^j_{\mathcal{V}}.
\end{equation}
The most regular 1-ring triangular face is then chosen as the one with the lowest value of $\overline{R}^i$. 

After applying the stereographic projection to $\mathbb S^2_0$ and $\mathbb S^2_n$ using the above-mentioned approach, we can compute the Beltrami coefficient $\mu_n$ of the mapping between $D_0$ and $D_n$. By Theorem~\ref{qc_bijective}, the mapping is bijective if and only if the supremum norm of its Beltrami coefficient is less than 1. Therefore, the presence of mesh fold-overs can be determined by evaluating the norm of the Beltrami coefficient. In order to correct the overlaps, we introduce a truncation function $\mathbb{I}$ for the Beltrami coefficient: 
\begin{equation}
\mathbb I(\mu_n)= \begin{cases} \mu_n & \text { if } |\mu_n|< 1, \\ (1-\delta) \frac{\mu_n}{|\mu_n|} & \text { if } |\mu_n| \geq 1,\end{cases}
\end{equation}
where $\delta$ is a prescribed truncation parameter. In practice, we set $\delta = 0.1$. For now, we neglect the outermost faces and only apply the truncation function to the faces at the central region of $D_0$. Denote the truncated Beltrami coefficient by $\widetilde{\mu}_n$. We can then obtain a new deforming map $\widetilde{g}_n$ by applying the LBS algorithm~\cite{lui2013texture}:
\begin{equation}
\widetilde{g}_n = \textbf{LBS}(\widetilde{\mu}_n),
\end{equation}
with the outermost region fixed. The updated spherical parameterization can be obtained through the inverse stereographic projection. Note that in some rare cases with a large number of mesh overlaps, the above truncation procedure may give a highly discontinuous Beltrami coefficient and hence still lead to a non-bijective reconstructed map. In such cases, one may halve the step size in Eq.~\eqref{eqt:displacement_discrete} and repeat the above correction procedure.

Now, note that the faces near the north pole were neglected in the above procedure. To also address potential overlaps of those faces, we project the updated spherical parameterization onto the complex plane using the south-pole stereographic projection. Then, we can repeat the above procedure to correct the mesh overlaps using the Beltrami coefficient, and finally obtain a bijective spherical parameterization using the inverse south-pole stereographic projection.

By including this north pole–south pole overlap correction scheme at each iteration of the spherical density-equalizing mapping process, we can ensure that the bijectivity is preserved throughout the iterative process.

\subsubsection{Re-coupling the deformation and density throughout the iterative process}\label{sect:sdem_recoupling}
Recall that the proposed spherical density-equalizing mapping method uses density diffusion to produce shape deformations on the sphere. In particular, the density values and density gradients are converted between vertices and faces as described in Eq.~\eqref{eqt:conversion} and Eq.~\eqref{eqt:gradrho_conversion} at each iteration, and the vertex positions are updated accordingly. However, in the discrete case, such a conversion may result in a smoothing effect and hence numerical inaccuracies in the density may accumulate throughout the iterative process. Also, the vertex positions may be changed under the overlap correction scheme. Consequently, the shape deformation at each iteration may not perfectly correspond to the actual density flow. To address this issue, we propose an additional step that re-couples the deformation and the density at the end of each iteration. 

Specifically, at the end of the $n$-th iteration, instead of directly passing the updated density obtained from Eq.~\eqref{eqt:diffusion_discrete} to the next iteration, we redefine the density on each triangular face $T = [\mathbf{x}_i,\mathbf{x}_j,\mathbf{x}_k] \in \mathcal{F}$ using the updated map $\mathbf{x}(t_{n+1})$ as follows:
\begin{equation}\label{eqt:coupling}
    \rho^{n+1}_{\mathcal{F}}(T) = \frac{\text{Population}}{\text{Area}\left([\mathbf{x}_i(t_{n+1}), \mathbf{x}_j(t_{n+1}), \mathbf{x}_k(t_{n+1})]\right)}.
\end{equation}
After getting the new density $\rho^{n+1}_{\mathcal{F}}$, we can update $n=n+1$ and proceed to Eq.~\eqref{eqt:conversion} for the next iteration. In other words, instead of solving one single density diffusion equation over time and obtaining the shape deformation at every iteration based on it, we solve a new density diffusion equation at every iteration based on the current mapping result and use the computed density gradient for one iteration only. With this additional re-coupling step, the accuracy of the final mapping result can be improved.

\subsubsection{Summary}
Putting together the initial spherical conformal parameterization (Section~\ref{sect:sdem_initial}) and the iterative scheme for density-equalization on the sphere (Section~\ref{sec:DEM_sphere}) with the overlap correction scheme (Section~\ref{sect:sdem_overlap}) and the re-coupling scheme (Section~\ref{sect:sdem_recoupling}), we have the proposed SDEM algorithm for computing spherical density-equalizing maps. 
Following~\cite{choi2018density}, we define the stopping criterion for the iterative scheme as $\displaystyle \frac{\text{sd}\left(\rho^{n}_{\mathcal{V}}\right)}{\text{mean}\left(\rho^{n}_{\mathcal{V}}\right)} < \epsilon$, where $\epsilon$ is a stopping parameter. The algorithm is summarized in Algorithm~\ref{alg:SDEM}. In practice, the time step size is set to be $\delta t = 0.1$, the stopping parameter is set to be $\epsilon = 10^{-3}$, and the maximum number of iterations is $n_{\text{max}} = 200$.

\begin{algorithm}[h]
\KwIn{A genus-0 closed surface $\mathcal{M}$, a prescribed population, an initial spherical conformal parameterization $f_0:\mathcal{M} \to \mathbb{S}^2$, the stopping parameter $\epsilon$, and the maximum number of iterations allowed $n_{\max}$.}
\KwOut{A spherical density-equalizing map $f:\mathcal{M}\to \mathbb{S}^2$.}
\BlankLine

Compute the initial density $\rho^0_{\mathcal{F}}$ on $f_0(\mathcal{M})$ based on the prescribed population\;

Set $n = 0$\;

Let $\mathbf{x}_i(0) = f_0(v_i)$  for every vertex $v_i \in \mathcal{V}$\;

\Repeat{$\frac{\text{sd}\left(\rho^{n}_{\mathcal{V}}\right)}{\text{mean}\left(\rho^{n}_{\mathcal{V}}\right)} < \epsilon$ \ or \ $n \geq n_{\max}$}{ 

Compute $A_n$ and $L_n$ using Eq.~\eqref{eqt:A_n} and Eq.~\eqref{eqt:L_n}\;
Obtain $\rho^{n+1}_{\mathcal{V}}$ by solving the diffusion equation~\eqref{eqt:diffusion_discrete}\;

Compute the velocity field ${\mathbf{v}}^{n+1}_{\mathcal{V}}$ using Eq.~\eqref{eqt:velocity_discrete}\;

Compute the projected velocity field $\widetilde{\mathbf{v}}^{n+1}_{\mathcal{V}}$ using Eq.~\eqref{eqt:velocity_projected_discrete}\;

Update the position of all $\mathbf{x}_i (t_{n+1})$ using Eq.~\eqref{eqt:displacement_discrete}\;

Apply the overlap correction scheme in Section~\ref{sect:sdem_overlap} to further update $\mathbf{x}_i(t_{n+1})$\;

Apply the re-coupling scheme in Section~\ref{sect:sdem_recoupling} to update $\rho^{n+1}_{\mathcal{F}}$ using Eq.~\eqref{eqt:coupling}\;

Update $n = n + 1$\;
}

The resulting spherical density-equalizing map is given by $f(v_i) = \mathbf{x}_i(t_n)$ for all $i$\;

\caption{Bijective spherical density-equalizing map (SDEM)}
\label{alg:SDEM}
\end{algorithm}

\subsection{Bijective landmark-aligned spherical density-equalizing map (LSDEM)}
In~\cite{lyu2023bijective}, it was shown that the planar density-equalizing mapping process can be reformulated as an energy minimization problem. Analogously, as the goal of our spherical density-equalizing mapping method is to transform the density gradient into shape deformations, the following energy is minimized throughout the mapping process:
\begin{equation}
E_{\text{SDEM}} = \int \left\|\nabla \rho \right\|^2.
\end{equation}

Moreover, it is worth noting that several prior works on spherical parameterization were also based on energy minimization. For instance, in~\cite{gu2004genus}, an iterative algorithm for computing spherical harmonic maps was developed by minimizing the harmonic energy:
\begin{equation}
    E_{\text{Harmonic}} =  \int \left\|\nabla f\right\|^2.
\end{equation}
As harmonic maps between genus-0 closed surfaces are conformal, minimizing $E_{\text{Harmonic}}$ gives a spherical conformal parameterization. Another iterative algorithm was proposed by Lui et al.~\cite{lui2007landmark} to achieve optimized landmark-aligned spherical parameterizations. More specifically, an additional landmark mismatch energy was incorporated into their algorithm:
\begin{equation}
    E_{\text{Landmark}} = \int  \left\|f(p_i) - q_i\right\|^2,
\end{equation}
where $\{p_i\}_{i=1}^k \leftrightarrow \{q_i\}_{i=1}^k$ are two sets of prescribed corresponding landmarks to be matched, with $k$ being the total number of landmarks. 

Motivated by the above approaches, here we aim to compute a bijective landmark-aligned spherical density-equalizing map (abbreviated as LSDEM) balancing density-equalization, conformality, and landmark-matching conditions. This is achieved by minimizing the following proposed combined energy: 
\begin{equation}\label{eqt:combined}
\begin{split}
    E &= \alpha \ E_{\text{SDEM}} + \beta  \ E_{\text{Harmonic}} + \gamma  \ E_{\text{Landmark}} \\
    &= \alpha \int  \left\|\nabla \rho \right\|^2 + \beta \int  \left\|\nabla f\right\|^2 + \gamma \int  \left\|f(p_i) - q_i\right\|^2,
\end{split}
\end{equation}
where $\alpha, \beta, \gamma$ are three nonnegative weighting parameters. Here, for the landmark-matching term $\gamma \int \left\|f(p_i) - q_i\right\|^2$, note that since all geodesic curves between points on the sphere $\mathbb S^2$ move along the great circle, the geodesic distance between two points decreases with their Euclidean distance. Therefore, we can simply use the standard Euclidean metric for the landmark-matching term to simplify our computation. 

\subsubsection{Descent direction for minimizing the proposed combined energy}\label{sect:lsdem_descent}

To compute the landmark-aligned spherical density-equalizing map, we need to find a suitable descent direction to minimize the proposed combined energy in Eq.~\eqref{eqt:combined}. In the following, we denote $ \displaystyle E_1 = \alpha \int \left\|\nabla \rho \right\|^2$, $\displaystyle E_2 = \beta \int \left\|\nabla f\right\|^2 $, $\displaystyle E_3 = \gamma \int \left\|f(p_i) - q_i\right\|^2$ for simplicity and determine the descent direction of them one by one.

We first consider the descent direction of $\displaystyle E_1 = \alpha \int \left\|\nabla \rho\right\|^2$. As explained in Section \ref{sec:SDEM}, the velocity field induced by the density is given by $\mathbf{v} = -\frac{\nabla \rho}{\rho}$. In order to maintain the deformed shape as a sphere, we remove the normal component $\mathbf{v}^{\perp} = (\mathbf{v} \cdot \mathbf{n})\mathbf{n}$ of the velocity $\mathbf{v}$. The descent direction can then be defined using the tangential velocity:
\begin{equation} 
    dE_1 = \alpha(\mathbf{v} - \mathbf{v}^{\perp}).
\end{equation}
We remark that in this minimization problem for $E_1$, we do not employ the usual variational method. This is because the descent direction obtained through the variational method can potentially lead to a local minimum. By contrast, the velocity field we use is based on Fick's law of diffusion and can achieve the density-equalizing result, which corresponds to the global minimum of $E_1$. 

For the second term $\displaystyle E_2 =\beta \int \left\|\nabla f \right\|^2$, the Euler--Lagrange equation can be derived as follows:
\begin{equation}
    \begin{aligned}
        \left. \frac{d}{dt} \right|_{t = 0}E_2(f+tg) & = \beta \int \left. \frac{d}{dt} \right|_{t = 0} \|\nabla (f + tg) \|^2 \\
        &= 2 \beta \int  \nabla f  \nabla g \\
        &= -2 \beta \int \Delta f g.
    \end{aligned}
\end{equation}
Note that the derivative in the above formula is negative when $g =  \beta \Delta f$. To ensure that the deformed map remains a sphere, we again remove the normal component $\Delta^{\perp} f = (\Delta f \cdot \mathbf{n})\mathbf{n}$ of $ \Delta f$ and only use its tangential component $\Delta \Bar{f} = \Delta f - \Delta^{\perp} f$. Now, $E_2$ achieves its minimum if and only if the tangential component $\Delta \Bar{f}$ vanishes, i.e., $\Delta f = \Delta^{\perp} f$. Therefore, we define
\begin{equation}
    dE_2 = \beta \Delta \Bar{f}.
\end{equation}

For the landmark mismatch energy $\displaystyle E_3 = \gamma \int \left\|f(p_i) - q_i\right\|^2$, we compute its Euler--Lagrange equation:
\begin{equation}
    \begin{aligned}
         \left. \frac{d}{dt} \right|_{t = 0} E_3(f+tg) & = \gamma \int \left. \frac{d}{dt} \right|_{t = 0} \left\|f(p_i)+tg(p_i) - q_i \right\|^2 \\
         &= 2 \gamma \int (f(p_i) - q_i) g(p_i).
    \end{aligned}
\end{equation}
Similar to the above, we remove the normal component of $\left. \frac{d}{dt} \right|_{t = 0} E_3(f+tg)$. The descent direction of $E_3$ can be expressed as follows:
\begin{equation}\label{eqt:land_descent}
    dE_3(p) = \left\{\begin{array}{ll}
         - \gamma (f(p_i)-q_i) & \text{ if } p = p_i, \ i=1,2,\cdots, k, \\
         0 &  \text{ otherwise.}
    \end{array}\right.
\end{equation}

Combining the above formulas, the descent direction for $E$ is 
\begin{equation}\label{eqt:descent}
    dE = dE_1 + dE_2 + dE_3.
\end{equation}
Hence, we can minimize the combined energy~\eqref{eqt:combined} by the following iterative scheme:
\begin{equation}
    f_{n+1} = f_{n} + \delta t \ dE^{n},
\end{equation}
where $\delta t$ is the time step size, $f_n$ is the map at the $n$-th iteration, and $dE^n$ is the descent direction for $E$ at the $n$-th iteration.

\subsubsection{Optimal rotation for further reducing the energy}\label{sect:lsdem_optimal_rotation}
While the iterative scheme described above can effectively reduce the proposed combined energy at every step, it is important to note that in practice, a large landmark mismatch can potentially result in mesh fold-overs during the deformation process. Specifically, a large landmark mismatch gives a large $dE_3$ and hence a large $dE$, which may subsequently lead to undesirable overlaps or intersections in the mesh and require extra effort to correct. Here, we introduce an extra step of optimal rotation to further reduce the landmark mismatch energy throughout the iterative process, thereby enhancing the overall quality of the mapping.

First, we establish the following result:
\begin{theorem}\label{lsdem_rotation_invariant}
    The energies $E_{\text{Harmonic}}$ and $E_{\text{SDEM}}$ are rotation invariant.
\end{theorem}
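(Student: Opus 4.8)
The plan is to exploit the fact that both energies are integrals of quantities that depend only on the intrinsic geometry of the map and on the Riemannian metric of the target sphere $\mathbb{S}^2$, neither of which is affected by an isometry of $\mathbb{S}^2$. Let $R \in SO(3)$ be an arbitrary rotation, regarded as a map $R: \mathbb{S}^2 \to \mathbb{S}^2$, and let $f: \mathcal{M} \to \mathbb{S}^2$ be a given map. I would show $E_{\text{Harmonic}}(R \circ f) = E_{\text{Harmonic}}(f)$ and $E_{\text{SDEM}}(R \circ f) = E_{\text{SDEM}}(f)$.

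First I would handle $E_{\text{Harmonic}} = \int \|\nabla f\|^2$. The key observation is that $R$ is a linear isometry of $\mathbb{R}^3$ restricting to an isometry of $\mathbb{S}^2$, so its differential satisfies $\|dR(\mathbf{w})\| = \|\mathbf{w}\|$ for every tangent vector $\mathbf{w}$. Writing the harmonic energy density in local coordinates on $\mathcal{M}$ as $\|\nabla f\|^2 = \sum_{\alpha} \|\partial_\alpha f\|^2$ (the sum of squared norms of the coordinate partial derivatives, with the appropriate metric normalization), the chain rule gives $\partial_\alpha (R \circ f) = dR(\partial_\alpha f)$, and since $dR$ preserves norms, each term $\|\partial_\alpha (R\circ f)\|^2 = \|\partial_\alpha f\|^2$ is unchanged. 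Integrating over $\mathcal{M}$ (whose measure is untouched) yields the invariance. In the discrete setting the same argument works verbatim with the cotangent-weighted edge sum $\sum_{[i,j]} w_{ij}\|f(\mathbf{x}_i) - f(\mathbf{x}_j)\|^2$, since $\|R f(\mathbf{x}_i) - R f(\mathbf{x}_j)\| = \|f(\mathbf{x}_i) - f(\mathbf{x}_j)\|$.

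Next I would treat $E_{\text{SDEM}} = \int \|\nabla \rho\|^2$. Here the relevant point is that the density $\rho$ is a scalar function defined as population per unit area, and a rotation is area-preserving, so the transported density is $\rho \circ R^{-1}$ with exactly the same values — the population in each region and the area of each region are both preserved by $R$. Thus $\nabla(\rho \circ R^{-1}) = dR(\nabla \rho \circ R^{-1})$ as vector fields on $\mathbb{S}^2$ (gradient of a scalar pulled through an isometry), and since $dR$ preserves norms, $\|\nabla(\rho\circ R^{-1})\|^2 = \|\nabla\rho\|^2 \circ R^{-1}$ pointwise; integrating and using that $R$ preserves the spherical area element gives $E_{\text{SDEM}}(R\circ f) = E_{\text{SDEM}}(f)$.

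The only subtlety — and the step I would be most careful about — is bookkeeping the two roles of the rotation: in the statement of the theorem the rotation is composed with $f$ so that the \emph{map} changes but its domain $\mathcal{M}$ and the ambient sphere are fixed, whereas the cleanest invariance argument phrases $R$ as moving points on $\mathbb{S}^2$ and pushing the density forward. Making these two viewpoints consistent (and checking that the projection-based constructions such as the tangential velocity $\mathbf{v} - \mathbf{v}^\perp$ and the normal $\mathbf{n}$ also commute with $R$, so that the energies being minimized really are the quantities above) is routine but worth stating explicitly. Once that identification is made, the invariance of each energy reduces to the single fact that $SO(3)$ acts on $\mathbb{S}^2$ by isometries, which preserve both lengths of tangent vectors and the area measure.
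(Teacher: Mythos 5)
Your proposal is correct, and it rests on the same underlying fact as the paper's proof --- that rotations are isometries of $\mathbb{S}^2$, hence preserve both the pointwise norm of a gradient and the area element --- but you package the argument differently. The paper decomposes an arbitrary rotation into rotations about the three coordinate axes, writes out the chain rule for $\overline{g} = g \circ R_z(\theta)$ explicitly, and verifies by direct computation that the cross terms in $\|\nabla \overline{g}\|^2$ cancel, repeating the same calculation for $\overline{\rho} = \rho \circ R_z(\theta)$; you instead invoke the coordinate-free statements that $dR$ preserves norms of tangent vectors for any $R \in SO(3)$ and that the gradient of a scalar transforms by $dR$ under an isometry, which handles all axes at once and in fact applies to any isometry, not just rotations. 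Your route is shorter and more general; the paper's is more self-contained at the level of elementary calculus. The subtlety you flag is genuine and worth the explicit sentence you give it: the paper's proof treats the precomposition $g \circ R$ (rotating the source coordinates), whereas the optimal-rotation step of the algorithm applies $R$ on the image side ($R \circ f_n$) and transports the density accordingly; both versions follow from the isometry argument, but the paper does not comment on the identification. Your discrete observation --- invariance of the cotangent energy $\sum_{[i,j]} w_{ij}\|Rf(\mathbf{x}_i) - Rf(\mathbf{x}_j)\|^2$ because $R$ preserves Euclidean distances --- is a useful addition that the paper omits, since the algorithm is ultimately run on meshes.
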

\begin{proof}
    For the energy $E_{\text{Harmonic}}$, we first denote the composition of any spherical mapping $g: \mathbb S^2 \rightarrow \mathbb S^2 $ with a rotation as $\overline{g}$ for simplicity. Our goal is to prove that the energy $E_{\text{Harmonic}}(\overline{g})$ remains the same as $E_{\text{Harmonic}}(g)$ for any rotation. Since any rotation on a sphere can be decomposed into rotations about the $x$-, $y$-, and $z$-axes, it suffices to prove the invariance of the energy under rotations about the three axes.

    Note that the gradient of a spherical mapping $g$ can be computed as $\nabla g = \left(\frac{\partial g}{\partial x}, \frac{\partial g}{\partial y}, \frac{\partial g}{\partial z}\right)$. The squared magnitude of the gradient is given by 
    \begin{equation}
    \left\|\nabla g\right\|^2 = \left(\frac{\partial g}{\partial x}\right)^2 + \left(\frac{\partial g}{\partial y}\right)^2 + \left(\frac{\partial g}{\partial z}\right)^2.
    \end{equation}
    Now, without loss of generality, we consider a rotation about the $z$-axis. The corresponding rotation matrix is given by
    \begin{equation}\label{rotation_matrix}
    R_z(\theta) = \begin{pmatrix}
         \cos{\theta} & -\sin{\theta} & 0\\ 
        \sin{\theta} & \cos{\theta}  & 0\\
        0 & 0 & 1\\
    \end{pmatrix},              
    \end{equation}
    where $\theta$ is the rotation angle. After applying this rotation, the point $(x, y, z)$ will be transformed to $(x^{R}, y^{R}, z^{R})$. We can then compute the gradient of the composition $\overline{g} = g \circ R_z(\theta)$ as follows:
    \begin{equation}
    \nabla \overline{g} = \left(\frac{\partial \overline{g}}{\partial x}, \frac{\partial \overline{g}}{\partial y},\frac{\partial \overline{g}}{\partial z}\right) = \left(\cos{\theta} \frac{\partial g}{\partial x^{R}} +\sin{\theta} \frac{\partial g}{\partial y^{R}}, \cos{\theta} \frac{\partial g}{\partial y^{R}} - \sin{\theta} \frac{\partial g}{\partial x^{R}}, \frac{\partial g}{\partial z^{R}}  \right),
    \end{equation} 
    and the squared magnitude of the gradient is given by:
    \begin{equation}
    \begin{aligned}
    \left\|\nabla \overline{g}\right\|^2  & = \left(\frac{\partial g}{\partial x^{R}}\right)^2 \left(\cos^2{\theta} + \sin^2{\theta}\right) + \left(\frac{\partial g}{\partial y^{R}}\right)^2 \left(\cos^2{\theta} + \sin^2{\theta}\right) + \left(\frac{\partial g}{\partial z^{R}}\right)^2 \\
    & = \left(\frac{\partial g}{\partial x^{R}}\right)^2 + \left(\frac{\partial g}{\partial y^{R}}\right)^2 + \left(\frac{\partial g}{\partial z^{R}}\right)^2.
    \end{aligned}
    \end{equation}
    Therefore, the harmonic energy after rotation is given by:
    \begin{equation}
    \begin{aligned}
            \int \left\|\nabla \overline{g} \right\|^2 &=\int \left( \left(\frac{\partial \overline{g}}{\partial x}\right)^2 + \left(\frac{\partial \overline{g}}{\partial y}\right)^2 +\left(\frac{\partial \overline{g}}{\partial z}\right)^2 \right) \\
            &=\int \left( \left(\frac{\partial g}{\partial x^{R}}\right)^2 + \left(\frac{\partial g}{\partial y^{R}}\right)^2 + \left(\frac{\partial g}{\partial z^{R}}\right)^2 \right) =\int \left\|\nabla g\right\|^2. 
    \end{aligned}
    \end{equation} 
    Thus, we conclude that $E_{\text{Harmonic}}$ remains unchanged under any rotation about the $z$-axis. Similarly, we can prove that $E_{\text{Harmonic}}$ is unchanged under any rotation about the $x$-axis or the $y$-axis. Consequently, $E_{\text{Harmonic}}$ is rotation invariant. 
    
    For $E_{\text{SDEM}}$, note that the density $\rho$ is a function from $\mathbb{S}^2$ to $\mathbb{R}$. Analogous to the above, it suffices to prove the invariance of the energy $E_{\text{SDEM}}$ under rotations about the three axes. 
    
    Again, without loss of generality, we consider a rotation about the $z$-axis and denote the composition of the density function with the rotation as $\overline{\rho} = \rho \circ R_z(\theta)$, where $R_z(\theta)$ is the rotation matrix defined in Eq.~\eqref{rotation_matrix}. Let $\left(x^{R},y^{R},z^{R} \right)$ represent the deformed position of $(x,y,z)$ after the rotation. Using the same method as above, the squared magnitude of the gradient integral is given by:
    \begin{equation}
    \begin{aligned}
           \int \left\|\nabla \overline{\rho}\right\|^2  &= \int \left(\left(\frac{\partial \overline{\rho}}{\partial x}\right)^2 + \left(\frac{\partial \overline{\rho}}{\partial y}\right)^2 +\left(\frac{\partial \overline{\rho}}{\partial z}\right)^2 \right)\\
           & = \int \left( \left(\frac{\partial \rho}{\partial x^{R}}\right)^2 + \left(\frac{\partial \rho}{\partial y^{R}}\right)^2 + \left(\frac{\partial \rho}{\partial z^{R}}\right)^2 \right) = \int \left\|\nabla \rho\right\|^2. 
    \end{aligned}
    \end{equation}
    Therefore, $E_{\text{SDEM}}$ is invariant under any rotation about the $z$-axis. Similarly, we can prove that $E_{\text{SDEM}}$ is unchanged under any rotation about the $x$-axis and $y$-axis. Consequently, $E_{\text{SDEM}}$ is also rotation invariant.\hfill$\blacksquare$

\end{proof}

By Theorem~\ref{lsdem_rotation_invariant}, the first two terms in the proposed combined energy $E$ in Eq.~\eqref{eqt:combined} are invariant under rotations. This allows us to consider finding an optimal rotation of the sphere to reduce the landmark mismatch energy $E_3$ without affecting the two other energy terms in the combined energy $E$. 

More specifically, let $\{f_n(p_i)\}_{i=1}^k$ be the current positions of the landmark vertices at the $n$-th iteration of the iterative scheme, and $\{q_i\}_{i=1}^k$ be their target positions on the sphere. We consider three rotation matrices $R_x, R_y, R_z$ about the $x$-axis, ,$y$-axis, and $z$-axis, respectively:
    \begin{equation}\label{rotation_matrix_xyz}
    R_x(\phi) = \begin{pmatrix}
         1 & 0 & 0\\
         0 & \cos{\phi} & -\sin{\phi}\\ 
        0 & \sin{\phi} & \cos{\phi}\\
    \end{pmatrix},  \ 
    R_y(\psi) = \begin{pmatrix}
         \cos{\psi} & 0 & \sin{\psi}\\ 
         0 & 1 & 0\\
        -\sin{\psi} & 0 & \cos{\psi} \\
    \end{pmatrix}, \ 
    R_z(\theta) = \begin{pmatrix}
         \cos{\theta} & -\sin{\theta} & 0\\ 
        \sin{\theta} & \cos{\theta}  & 0\\
        0 & 0 & 1\\
    \end{pmatrix},              
    \end{equation}
where $\phi, \psi, \theta$ are the rotation angles. We can then search for the optimal rotation angles that minimize the landmark mismatch error:
\begin{equation}
    L(\phi, \psi, \theta) = \sum_{i=1}^k \left\|R_x(\phi) R_y(\psi) R_z(\theta) f_n(p_i) - q_i \right\|^2.
\end{equation}
Specifically, it is easy to see that
\begin{equation}
    \frac{\partial L}{\partial \phi} = 2 \sum_{i=1}^k \left\langle R_x R_y R_z f_n(p_i) - q_i, \frac{\partial R_x}{\partial \phi} R_y R_z f_n(p_i) \right\rangle,
\end{equation}
\begin{equation}
    \frac{\partial L}{\partial \psi} = 2 \sum_{i=1}^k \left\langle R_x R_y R_z f_n(p_i) - q_i, R_x \frac{\partial R_y}{\partial \psi} R_z f_n(p_i) \right\rangle,
\end{equation}
\begin{equation}
    \frac{\partial L}{\partial \phi} = 2 \sum_{i=1}^k \left\langle R_x R_y R_z f_n(p_i) - q_i, R_x R_y \frac{\partial R_z}{\partial \theta} f_n(p_i) \right\rangle,
\end{equation}
and so one can easily obtain the optimal rotation angles $(\phi^*, \psi^*, \theta^*)$ using gradient descent. By applying the three optimal rotations $R_x(\phi^*)$, $R_y(\psi^*)$, $R_z(\theta^*)$ on the mapping $f_n$, the landmark mismatch energy $E_{\text{landmark}}$ of the updated spherical mapping $\widetilde{f}_n$ will be less than or equal to that of $f_n$ (see Fig.~\ref{fig:LSDEM_rotation} for an illustration). Moreover, since the two energies $E_{\text{Harmonic}}$ and $E_{\text{SDEM}}$ are unchanged under the rotation as proved in Theorem~\ref{lsdem_rotation_invariant}, it follows that the combined energy $E$ of $\widetilde{f}_n$ is also less than or equal to that of $f_n$. In other words, this extra optimal rotation step can help us further reduce the landmark mismatch before performing the descent step. We can then replace $f_n$ in Section~\ref{sect:lsdem_descent} with $\widetilde{f}_n$ and proceed with the descent step to obtain $f_{n+1}$.

\begin{figure}[t]
   \centering
   \includegraphics[width=\textwidth]{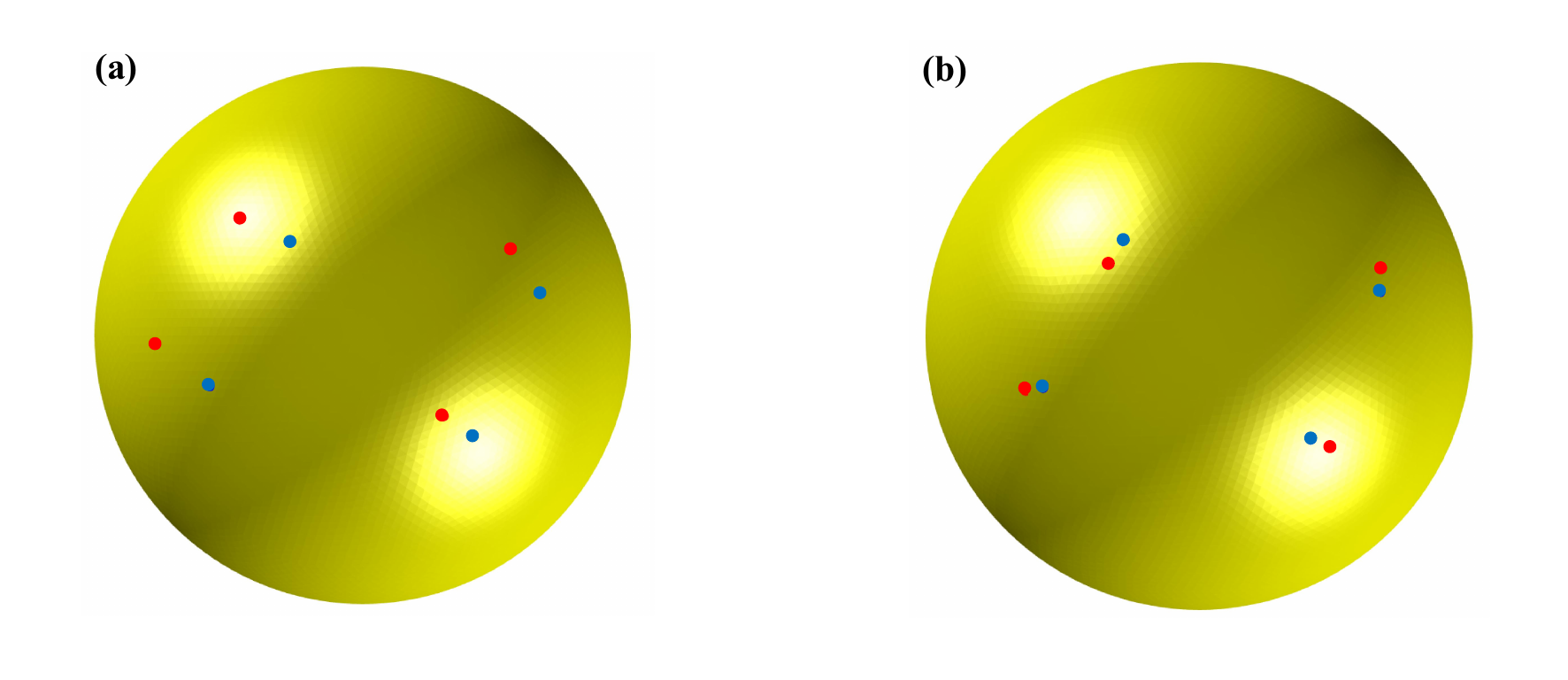}
    \caption{\textbf{An illustration of the optimal rotation step in the proposed LSDEM algorithm.} (a) A spherical mapping with the landmarks (red) and their target positions (blue). (b) The optimal rotation result, in which the landmark mismatch is reduced.}
    \label{fig:LSDEM_rotation}
\end{figure}

\subsubsection{Summary}
Analogous to the SDEM algorithm, here in the proposed LSDEM algorithm, we start by computing an initial spherical conformal parameterization (Section~\ref{sect:sdem_initial}), from which we can get the initial density and the initial combined energy. Then, we deform the initial spherical mapping using an iterative process consisting of both the optimal rotation step (Section~\ref{sect:lsdem_optimal_rotation}) and the descent step (Section~\ref{sect:lsdem_descent}) for minimizing the combined energy. The previously mentioned overlap correction scheme (Section~\ref{sect:sdem_overlap}) and the re-coupling scheme (Section~\ref{sect:sdem_recoupling}) are also included in the iterative process. We stop the iterations when the mapping result stabilizes: $\|f_n - f_{n-1}\| < \epsilon$, where $\epsilon$ is the stopping parameter. Ultimately, $f = f_n$ is the desired landmark-aligned bijective spherical density-equalizing map. 

The proposed LSDEM method is summarized in Algorithm~\ref{alg:LSDEM}. In practice, the time step size is set to be $\delta t = 0.01$, the stopping parameter is set to be $\epsilon = 10^{-3}$, and the maximum number of iterations is $n_{\text{max}} = 200$. 

\begin{algorithm}[h]
\KwIn{A genus-0 closed surface $\mathcal{M}$, a prescribed population, an initial spherical conformal parameterization $f_0:\mathcal{M} \to \mathbb{S}^2$,
the landmarks $\{p_1,\dots ,p_k\} \subset \mathcal{M}$ and target positions $\{q_1,\dots ,q_k\} \subset \mathbb{S}^2$, the stopping parameter $\epsilon$, and the maximum number of iterations allowed $n_{\max}$.}
\KwOut{A landmark-aligned spherical density-equalizing map $f:\mathcal{M}\to \mathbb{S}^2$.}
\BlankLine

Compute the initial density $\rho^0_{\mathcal{F}}$ on $f_0(\mathcal{M})$ based on the prescribed population\;

Set $n = 0$\;

\Repeat{$\|f_n - f_{n-1}\| < \epsilon$ \ or \ $n \geq n_{\max}$}{ 
Compute an optimal rotation of $f_n$ as described in Section~\ref{sect:lsdem_optimal_rotation} and obtain $\widetilde{f}_n$ \;

Compute the descent direction $dE^n$ by Eq.~\eqref{eqt:descent} and Update $f_{n+1} = \widetilde{f}_n + \delta t \ dE^n$ \;

Apply the overlap correction scheme in Section~\ref{sect:sdem_overlap} to further update $f_{n+1}$\;

Apply the re-coupling scheme in Section~\ref{sect:sdem_recoupling} to update $\rho^{n+1}_{\mathcal{F}}$ using Eq.~\eqref{eqt:coupling}\;

Update $n = n + 1$\;

}

The resulting landmark-aligned spherical density-equalizing map is given by $f = f_n$\;

\caption{Landmark-aligned bijective spherical density-equalizing map (LSDEM)}
\label{alg:LSDEM}
\end{algorithm}
 
We remark that by changing the values of the parameters $\alpha$, $\beta$, $\gamma$ in the combined energy $E$, we can achieve different desired mapping effects in the LSDEM result. Specifically, by using a larger value of $\alpha$, we can achieve a more density-equalizing result. By increasing the value of $\beta$, we can reduce the angle distortion in the mapping result. Lastly, by increasing the value of $\gamma$, we can reduce the landmark mismatch in the mapping result.


\begin{figure}[t]
    \centering
    \includegraphics[width=\textwidth]{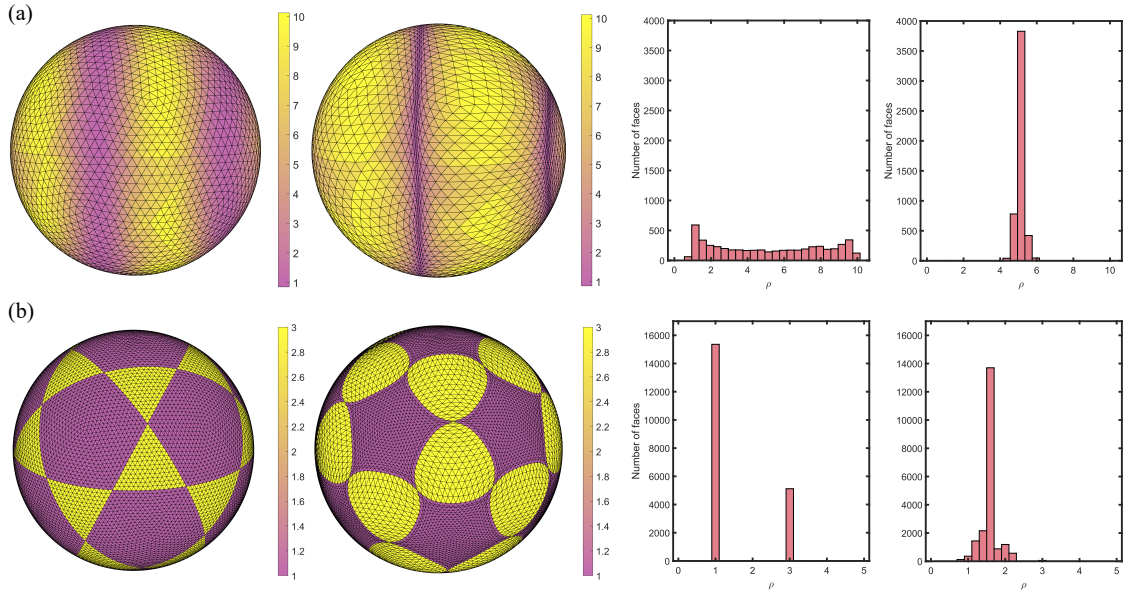}
    \caption{\textbf{Spherical density-equalizing maps of spherical surfaces.} Each row shows one example. (a) An example with continuous input density. (b) An example with discontinuous input density. Left to right: The initial spherical surface color-coded with the initial density, the final SDEM result color-coded with the initial density, the histogram of the initial density, and the histogram of the final density.}
    \label{fig:sdem_sphere}
\end{figure}

\section{Experiments}\label{sect:experiment}
In this section, we present experimental results to demonstrate the effectiveness of our proposed SDEM and LSDEM algorithms. The algorithms are implemented using MATLAB R2021a on the Windows platform. All experiments are conducted on a computer with an Intel(R) Core(TM) i9-12900 2.40 GHz processor and 32GB memory. The surface meshes are from online mesh repositories~\cite{common}. All surfaces are discretized in the form of triangular meshes.

\subsection{Spherical density-equalizing map}
We start by considering some examples of mapping a spherical surface using the proposed SDEM method. In the example shown in Fig.~\ref{fig:sdem_sphere}(a), we define different populations at different regions on the sphere to give a continuous initial density, with the maximum density and minimum density different by 10 times. We then apply the proposed method and obtain the spherical density-equalizing mapping result as shown in the figure, in which the high-density region is enlarged significantly and the low-density region is shrunk. By considering the initial and final density histograms, we can see that the density is effectively equalized using our method.

We then consider a more complicated spherical example involving multiple regions with discontinuous densities (Fig.~\ref{fig:sdem_sphere}(b)). In particular, we define different populations on the sphere such that the initial density of all the triangular regions is three times the density of all the pentagonal regions. It can be observed that all the triangular regions expand and all the pentagonal regions shrink under the spherical density-equalizing map. We can also see from the initial and final density histograms that the density is effectively equalized using our method.

\begin{figure}[t]
    \centering
    \includegraphics[width=\textwidth]{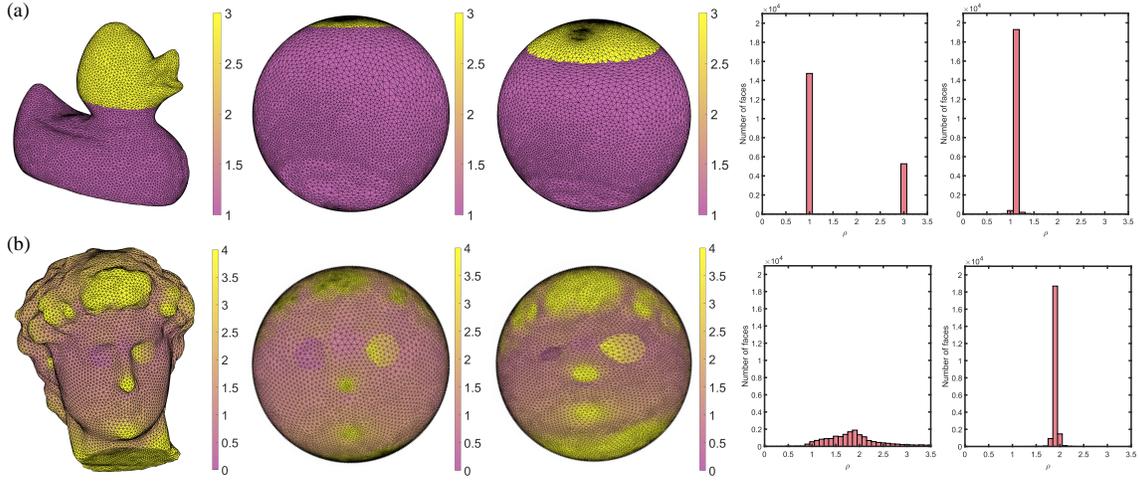}
    \caption{\textbf{Spherical density-equalizing maps of genus-0 closed surfaces.} Each row shows one example. (a) The Duck model. (b) The David model. Left to right: The input surface mesh color-coded with the initial density, the spherical conformal parameterization color-coded with the initial density, the final SDEM result color-coded with the initial density, the histogram of the initial density, and the histogram of the final density.}
    \label{fig:sdem_surface}
\end{figure}

Next, we consider computing spherical density-equalizing maps for general genus-0 surfaces. In Fig.~\ref{fig:sdem_surface}(a), we consider mapping the Duck model with the head part of it enlarged. As shown in the figure, we can apply our proposed SDEM method and compute a spherical density-equalizing map with the desired effect. The histograms show that the density is effectively equalized. Similarly, as shown in Fig.~\ref{fig:sdem_surface}(b), we can effectively magnify and shrink different parts in the mapping result of the David model. In particular, one of its eyes is enlarged significantly while the other one is shrunk.

For a more quantitative analysis of our proposed SDEM method, in Table~\ref{tab:SDEM} we record the computational time, variance of the initial density, variance of the final density, and number of overlaps for mapping different surface models. It can be observed that under our SDEM algorithm, the variance in the density can be significantly reduced. Also, from the number of overlaps, we can see that the mappings are all bijective. As for the computational time, we can see that our algorithm is highly efficient and only takes a few seconds for all examples.

\begin{table}[t!]
\small
    \caption{\textbf{The performance of our SDEM algorithm.} For each surface, we record the number of triangle elements, the computational time, the variance of the normalized initial density $\widetilde{\rho}_1 = \frac{\rho_1}{\text{Mean}(\rho_1)}$ and the normalized final density $\widetilde{\rho}_2  = \frac{\rho_2}{\text{Mean}(\rho_2)}$, where $\rho_1$ is the initial vertex density and $\rho_2$ is the final vertex density, and the number of overlaps.}\label{tab:SDEM}
  \begin{center}
  \begin{tabular}{|c|c|c|c|c|c|} \hline
    \bf Surface & \bf \# Faces & \bf Time (s) &\bf $\text{Var}(\widetilde{\rho}_1)$  &\bf $\text{Var}(\widetilde{\rho}_2)$ & \bf \# Overlaps \\\hline
    Sphere 1 (Fig.~\ref{fig:sdem_sphere}(a)) & 5120 & 0.4471 & 0.3236 & $< 0.0001$ & 0 \\ \hline
    Sphere 2 (Fig.~\ref{fig:sdem_sphere}(b)) & 20480 & 0.2443 & 0.2935 & 0.0015 & 0 \\ \hline
    Duck (Fig.~\ref{fig:sdem_surface}(a)) & 20000 & 1.4416  & 0.3294  & $< 0.0001$ & 0\\ \hline
    David (Fig.~\ref{fig:illustration}, enlarge) & 21338 & 1.4281 & 0.6099 & $< 0.0001$ & 0 \\ \hline
    David (Fig.~\ref{fig:illustration}, shrink) & 21338 & 1.4295 & 0.6004 & $< 0.0001$ & 0 \\ \hline
    David (Fig.~\ref{fig:sdem_surface}(b), mixed) & 21338 & 1.4267 & 0.5949  & $< 0.0001$ & 0 \\ \hline
  \end{tabular}
\end{center}
\end{table}

\begin{figure}[t!]
    \centering
    \includegraphics[width=\textwidth]{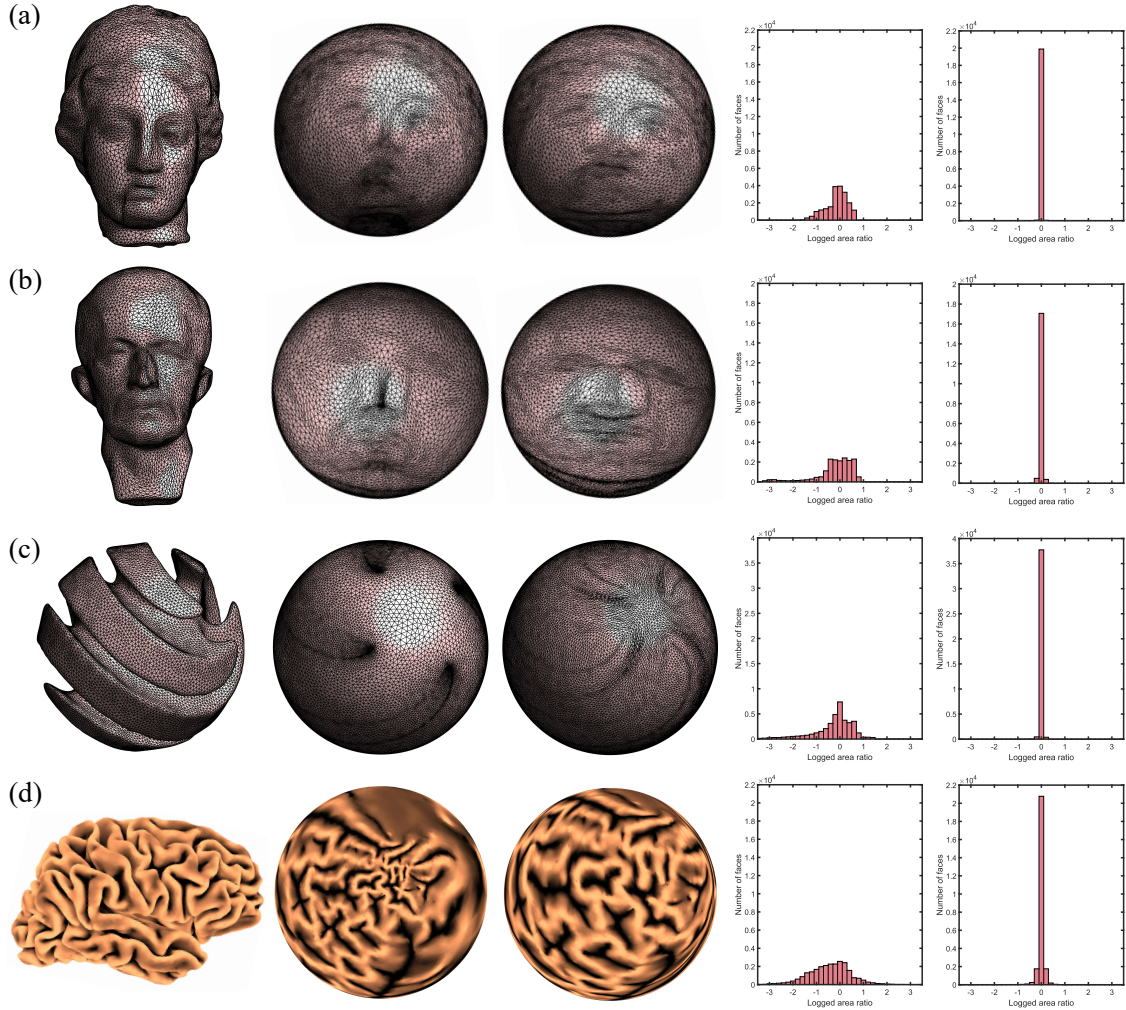}
    \caption{\textbf{Spherical area-preserving parameterization of genus-0 closed surfaces obtained by the SDEM method.} Each row shows one example. (a) The Igea model. (b) The Max Planck model. (c) A twisted ball. (d) A brain cortical surface. Left to right: The input surface mesh, the initial spherical conformal parameterization, the final SDEM result, the histogram of the logged area ratio $d_{\text{area}}$ of the initial spherical parameterization, and the histogram of the logged area ratio $d_{\text{area}}$ of the final spherical parameterization. The surfaces in (d) are color-coded with the brain surface curvature for better visualization.}
    \label{fig:sdem_area}
\end{figure}

Note that by setting the population as the face area of the input mesh and applying the SDEM method, we can achieve a spherical area-preserving parameterization. To illustrate this, we consider several genus-0 closed surfaces as shown in Fig.~\ref{fig:sdem_area}. For each surface, we first parameterize it onto the sphere using the spherical conformal parameterization. Then, we apply the SDEM method to obtain an area-preserving parameterization. To assess the area-preserving quality of the spherical parameterizations, we can evaluate the area distortion by considering the logged area ratio for every triangular face $T$:
\begin{equation}
    d_{\text{area}}(f)(T) = \log\left(\frac{\text{Area}(f(T))/\sum_{T'\in \mathcal{F}} \text{Area}(f(T'))}{\text{Area}(T)/\sum_{T'\in \mathcal{F}} \text{Area}(T')}\right),
\end{equation}
where $f$ is the spherical parameterization and $\mathcal{F}$ is the set of all triangular faces. Note that the two factors in the numerator and denominator are used for normalization so that a perfectly area-preserving parameterization would yield $d_{\text{area}} \equiv 0$. From the results in Fig.~\ref{fig:sdem_area}, it can be observed that our SDEM method can handle different genus-0 closed surfaces with highly complex geometry very well. Specifically, the initial area distortion histograms in Fig.~\ref{fig:sdem_area} show that the area distortion is very large in the initial conformal parameterization for all examples. By contrast, the final area distortion histograms are all highly concentrated, which indicates that the final mappings are highly area-preserving. For a more quantitative comparison, Table~\ref{tab:SDEM_area} records the computational time, initial and final area distortions, and the number of overlaps for different surfaces. In particular, it can be observed that our SDEM algorithm effectively reduces the area distortion by over 90\% in all experiments while maintaining the bijectivity of the mappings. This shows that our algorithm is capable of computing the spherical area-preserving parameterization for a wide range of genus-0 closed surfaces.

\begin{table}[t!]
\small
    \caption{\textbf{The performance of our SDEM algorithm for spherical area-preserving parameterization.} For each surface, we record the number of triangle elements, the computational time, mean and standard deviation of the initial area distortion $|d_{\text{area}}(f_0)|$ and the final area distortion $|d_{\text{area}}(f)|$, and the number of overlaps.}\label{tab:SDEM_area}
  \begin{center}
  \begin{tabular}{|C{15mm}|c|c|c|c|c|c|c|} \hline
    \multirow{ 2}{*}{\bf Surface} & \multirow{ 2}{*}{\bf \# Faces} & \multirow{ 2}{*}{\bf Time (s)}  & \multicolumn{2}{c|}{\bf $|d_{\text{area}}(f_0)|$} & \multicolumn{2}{c|}{\bf $|d_{\text{area}}(f)|$}  & \multirow{ 2}{*}{\bf \# Overlaps} \\ \cline{4-7}
    & & & \bf Mean & \bf SD & \bf Mean & \bf SD & \\ \hline
    Chinese Lion (Fig.~\ref{fig:illustration}) & 10000 & 6.2777 & 1.6837 & 1.6941 & 0.1199 & 0.1786 & 0 \\  \hline
    Igea (Fig.~\ref{fig:sdem_area}(a)) & 20000  & 1.1320 & 0.3668 &  0.3163 &  0.0146 & 0.0183 & 0 \\ \hline 
    Max Planck (Fig.~\ref{fig:sdem_area}(b)) & 18000 & 1.0756 & 0.5619 & 0.6252 & 0.0296 & 0.0411  & 0 \\ \hline 
    Twisted Ball (Fig.~\ref{fig:sdem_area}(c)) & 38620 & 1.5043 & 0.6240 & 0.7174 & 0.0210 & 0.0294 & 0 \\ \hline 
    Brain (Fig.~\ref{fig:sdem_area}(d)) & 25000 & 4.5765 & 0.7514 & 0.6222 & 0.0698 & 0.1414 & 0 \\ \hline
    Frog (Fig.~\ref{fig:remeshing_comparison}) & 20000 & 2.9655 & 0.9056 & 1.3382 & 0.0460 & 0.0822 & 0\\ \hline
  \end{tabular}
\end{center}
\end{table}

\begin{figure}[t!]
    \centering
    \includegraphics[width=0.8\textwidth]{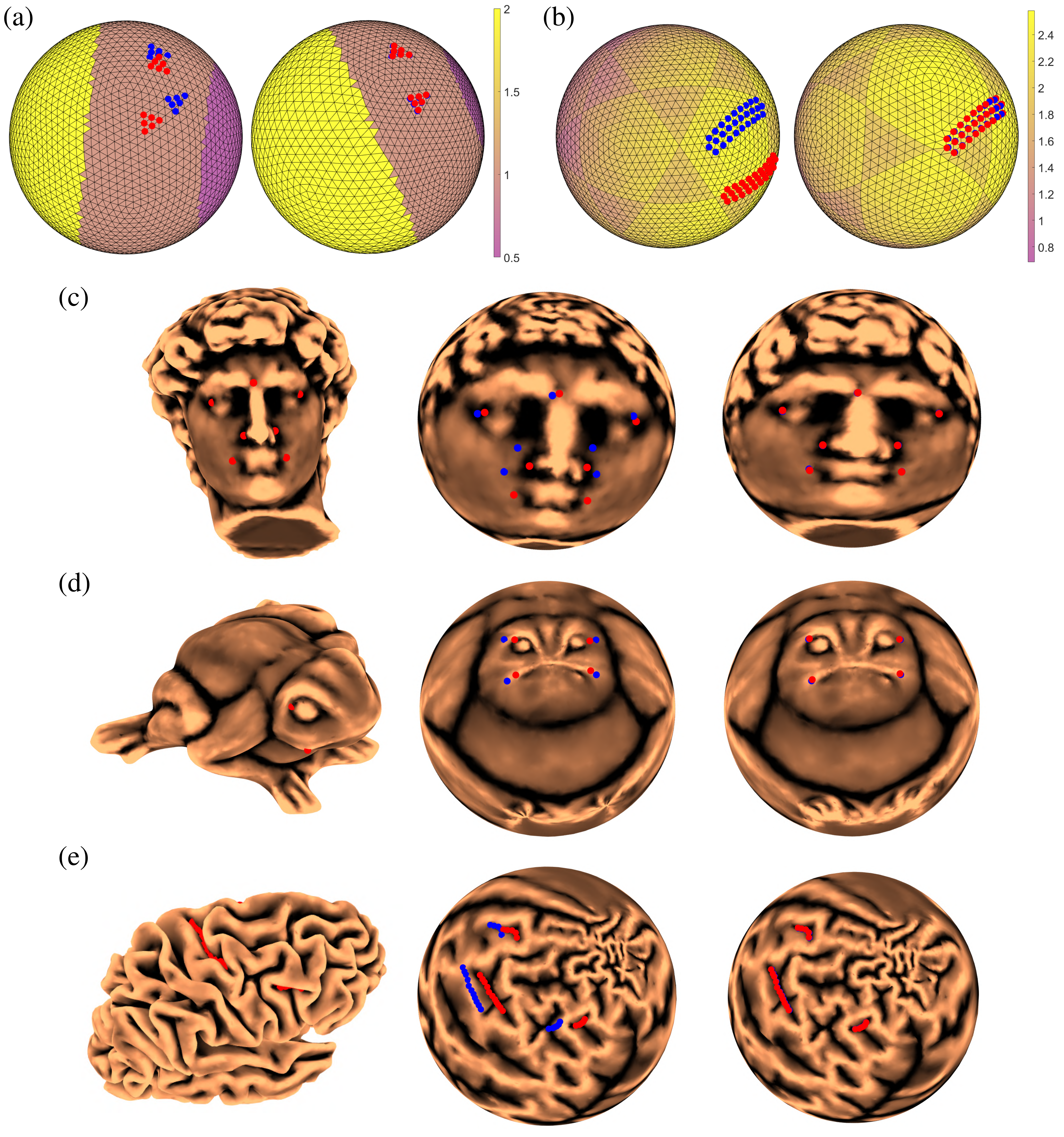}
    \caption{\textbf{Landmark-aligned spherical density-equalizing maps of genus-0 closed surfaces.} (a)--(b)~Two examples of mapping a spherical surface with different prescribed populations and landmark constraints. In each example, the left panel shows the initial spherical surface color-coded with the initial density together with the labelled landmarks (red dots) and the prescribed target positions (blue dots), and the right panel shows the final LSDEM result. (c)--(e) Examples of mapping general genus-0 closed surfaces. Left to right: The input genus-0 surface color-coded with the surface mean curvature and the labelled landmarks (red dots), the initial spherical conformal parameterization together with the labelled landmarks (red dots) and the prescribed target positions (blue dots), and the final LSDEM result. In all examples, the parameters in the energy~\eqref{eqt:combined} are set as $(\alpha, \beta, \gamma) = (1, 2, 5)$.}
    \label{fig:result_lsdem}
\end{figure}

\subsection{Landmark-aligned spherical density-equalizing map}
After demonstrating the effectiveness of our proposed SDEM method, we consider our proposed LSDEM method. Fig.~\ref{fig:result_lsdem}(a)--(b) show two examples of mapping a spherical surface with landmark constraints using the proposed LSDEM method, and Fig.~\ref{fig:result_lsdem}(c)--(e) shows several examples of mapping general genus-0 closed surfaces with prescribed landmark constraints. It can be observed in all examples that by using the proposed LSDEM method, we can obtain spherical mapping results with different regions enlarged or shrunk suitably and with all landmarks being well aligned.

\begin{table}[t!]
\small
    \caption{\textbf{The performance of our LSDEM algorithm.} For each surface, we record the number of triangle elements, the computational time, the variance of the normalized initial density $\widetilde{\rho}_1 = \frac{\rho_1}{\text{Mean}(\rho_1)}$ and the normalized final density $\widetilde{\rho}_2 = \frac{\rho_2}{\text{Mean}(\rho_2)}$, where $\rho_1$ is the initial vertex density and $\rho_2$ is the final vertex density, the mean of the norm of the Beltrami coefficient $|\mu|$, the 2-norm of the landmark error in initial surface $\textbf{LE}_1$, the 2-norm of the landmark error in the final result $\textbf{LE}_2$, and the number of overlaps. In all examples, the parameters in the energy~\eqref{eqt:combined} are set as $(\alpha, \beta, \gamma) = (1, 2, 5)$.}\label{tab:LSDEM}
  \begin{center}
  \setlength{\tabcolsep}{1mm}{
  \begin{tabular}{|C{13mm}|c|c|c|c|c|c|c|c|} \hline
    \bf Surface & \bf \# Faces & \bf Time (s) &\bf $\text{Var}(\widetilde{\rho}_1)$  &\bf $\text{Var}(\widetilde{\rho}_2)$  & \bf Mean$(|\mu|)$ & \bf $\textbf{LE}_1$ & $\textbf{LE}_2$  & \bf \# Overlaps \\\hline
    Sphere (triangle) (Fig.~\ref{fig:result_lsdem}(a)) & 5120 & 1.6780 & 0.1857 & 0.0077 & 0.0788 & 2.7450 & 0.0897 & 0 \\ \hline
    Sphere (rectangle) (Fig.~\ref{fig:result_lsdem}(b))  & 5120 & 1.2989 & 0.0873 & 0.0013 & 0.0479 & 15.9647  & 0.2501 & 0 \\ \hline
    David (Fig.~\ref{fig:result_lsdem}(c)) & 21338 & 13.1052 & 0.5962 & 0.0118  & 0.1470& 0.9222 & 0.0405 & 0 \\ \hline 
    Frog (Fig.~\ref{fig:result_lsdem}(d)) & 20000 & 10.8065 & 26.6130 &  0.6132  & 0.1411 & 0.2913 &  0.0459 & 0\\ \hline
    Brain (Fig.~\ref{fig:result_lsdem}(e)) & 25000 & 16.5567 & 1.4136 & 0.6063  &  0.0876& 3.9402  & 0.0188 & 0 \\ \hline
  \end{tabular}}
\end{center}
\end{table}

To analyze the performance of our LSDEM algorithm more quantitatively, Table~\ref{tab:LSDEM} shows the computational time, the variance of the initial density and final density, the norm of the Beltrami coefficient, the initial and final landmark mismatch error, and the number of overlaps for mapping different surfaces. Note that because of the additional landmark constraints in the LSDEM problem formulation, it is expected that the density-equalizing effect will not be as good as the one achieved by SDEM. Nevertheless, we can see in all experimental results that the variance of the density is significantly reduced under the LSDEM algorithm, which indicates that the mappings are close to density-equalizing. We also see that the landmark mismatch error is very small and the quasi-conformal distortion is low in all examples. Moreover, the number of overlaps is again 0 in all experiments, which indicates that our LSDEM results are all bijective.

It is natural to ask how the choices of the parameters $\alpha, \beta, \gamma$ in the energy~\eqref{eqt:combined} will affect the mapping results in the proposed LSDEM algorithm. Here, we consider mapping the David model with different values of $\alpha, \beta, \gamma$ used and analyzing the mapping results in terms of the density-equalizing effect, quasi-conformal distortion, landmark mismatch error, and bijectivity. As shown in Table~\ref{tab:LSDEM_para}, if we increase the value of $\alpha$ while keeping $\beta$ and $\gamma$ fixed, we will be able to reduce the variance of the final density. In other words, we will achieve a better density-equalizing effect. By contrast, if we increase the value of $\beta$ while keeping $\alpha$ and $\gamma$ fixed, the quasi-conformal distortion will be reduced. If we increase the value of $\gamma$ while keeping $\alpha$ and $\beta$ unchanged, we will achieve a lower landmark mismatch error. It is noteworthy that the mapping results contain no overlaps for all combinations of $\alpha, \beta, \gamma$, which indicates that the bijectivity of the mappings is ensured under our LSDEM method. 

\begin{table}[t!]
\small
    \caption{\textbf{The performance of our LSDEM algorithm with different parameters.} Here, $\alpha, \beta, \gamma$ are the parameters in the energy~\eqref{eqt:combined}. See the caption of Table~\ref{tab:LSDEM} for the description of the other columns. }\label{tab:LSDEM_para}
  \begin{center}
  \setlength{\tabcolsep}{4mm}{
  \begin{tabular}{|c|c|c|c|c|c|c|c|} \hline 
    \bf $\alpha$ & \bf $\beta$ & \bf $\gamma$
      & \bf $\text{Var}(\widetilde{\rho}_2)$   & \bf Mean$(|\mu|)$  & \bf $\textbf{LE}_2$  & \bf \# Overlaps  \\\hline 
    1 & \multirow{3}{*}{1}  & \multirow{3}{*}{1} & 0.0027 & 0.1687 & 0.2412 & 0  \\ \cline{1-1} \cline{4-7}
    3 &  &  & 0.0009 & 0.1720 & 0.3006 & 0  \\ \cline{1-1} \cline{4-7}
    5 &  &  & 0.0004 & 0.1697 & 0.2292 & 0 \\ 
    \hline
    \multirow{3}{*}{1} & 1 & \multirow{3}{*}{1} & 0.0027 & 0.1687 & 0.2412 & 0  \\ \cline{2-2} \cline{4-7}
     & 3 &  & 0.0159 & 0.1468 & 0.1670 & 0 \\ \cline{2-2} \cline{4-7}
     & 5 &  & 0.0573 & 0.1195 & 0.2576 & 0 \\ 
    \hline
    \multirow{3}{*}{1} & \multirow{3}{*}{1} & 1 & 0.0027 & 0.1687 & 0.2412 & 0  \\ \cline{3-3} \cline{4-7}
     &  & 3  & 0.0024 & 0.1690 & 0.1939 & 0 \\ \cline{3-3} \cline{4-7}
     &  & 5  & 0.0064 & 0.1527 & 0.0714 & 0 \\ \hline
  \end{tabular}}
\end{center}
\end{table}

\section{Applications}\label{sec:applications}

In this section, we introduce the applications of our proposed SDEM and LSDEM methods for surface registration, surface remeshing, and spherical data visualization.

\begin{figure}[t]
    \centering
    \includegraphics[width=\textwidth]{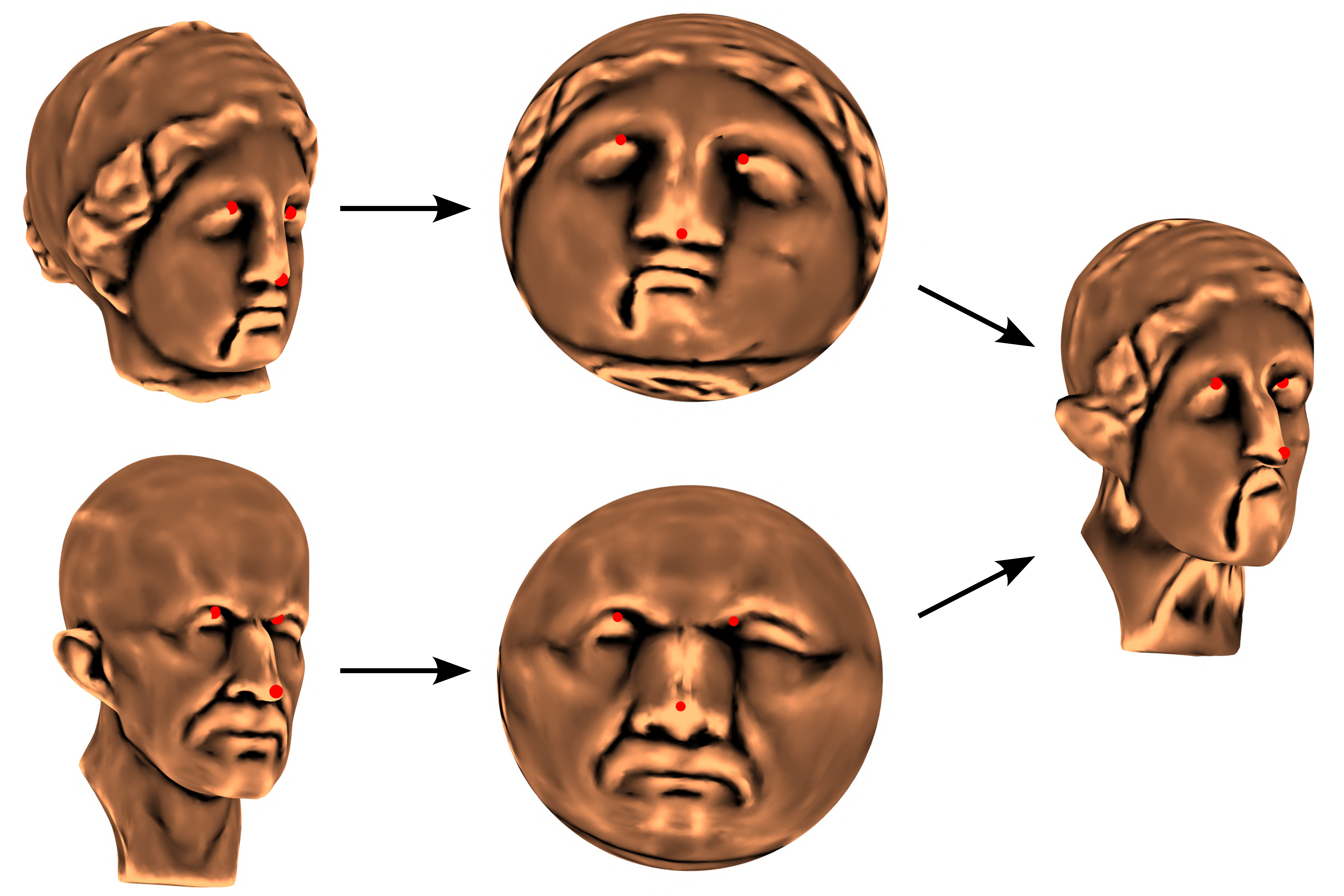}
    \caption{\textbf{Surface registration via our proposed LSDEM method.} We first label some consistent features on both the Igea model and the Max Planck model as landmarks. Then, we compute a landmark-aligned spherical density-equalizing map for each model, with the two sets of landmarks mapped to consistent locations on the unit sphere. We can then use the spherical parameterization to register the Igea model with the Max Planck surface, with the eyes and noses well aligned.}
    \label{fig:registration}
\end{figure}

\subsection{Surface registration}
Suppose $\mathcal{M}$, $\mathcal{N}$ are two genus-0 closed surfaces with some corresponding features $\{p_i\}_{i=1}^k \leftrightarrow \{q_i\}_{i=1}^k$. Using the proposed LSDEM method, we can find two spherical parameterizations $f: \mathcal{M} \to \mathbb{S}^2$ and $g: \mathcal{N} \to \mathbb{S}^2$ with the landmarks well aligned (i.e., $f(p_i) \approx g(q_i)$ for all $i = 1, 2, \dots, k$). Then, we can use the inverse mapping to build a 1-1 correspondence between $\mathcal{M}$ and $\mathcal{N}$. More explicitly, the composition $g^{-1} \circ f: \mathcal{M} \to \mathcal{N}$ will be a landmark-aligned mapping between the two surfaces. Fig.~\ref{fig:registration} shows an example of registering the Igea model and the Max Planck model, in which we use the eyes and noses as landmarks for computing the landmark-aligned spherical parameterizations. We can then obtain the registration mapping from Igea onto Max Planck, with the eyes and noses well matched in the registration result.

\begin{figure}[t!]
    \centering
    \includegraphics[width=\textwidth]{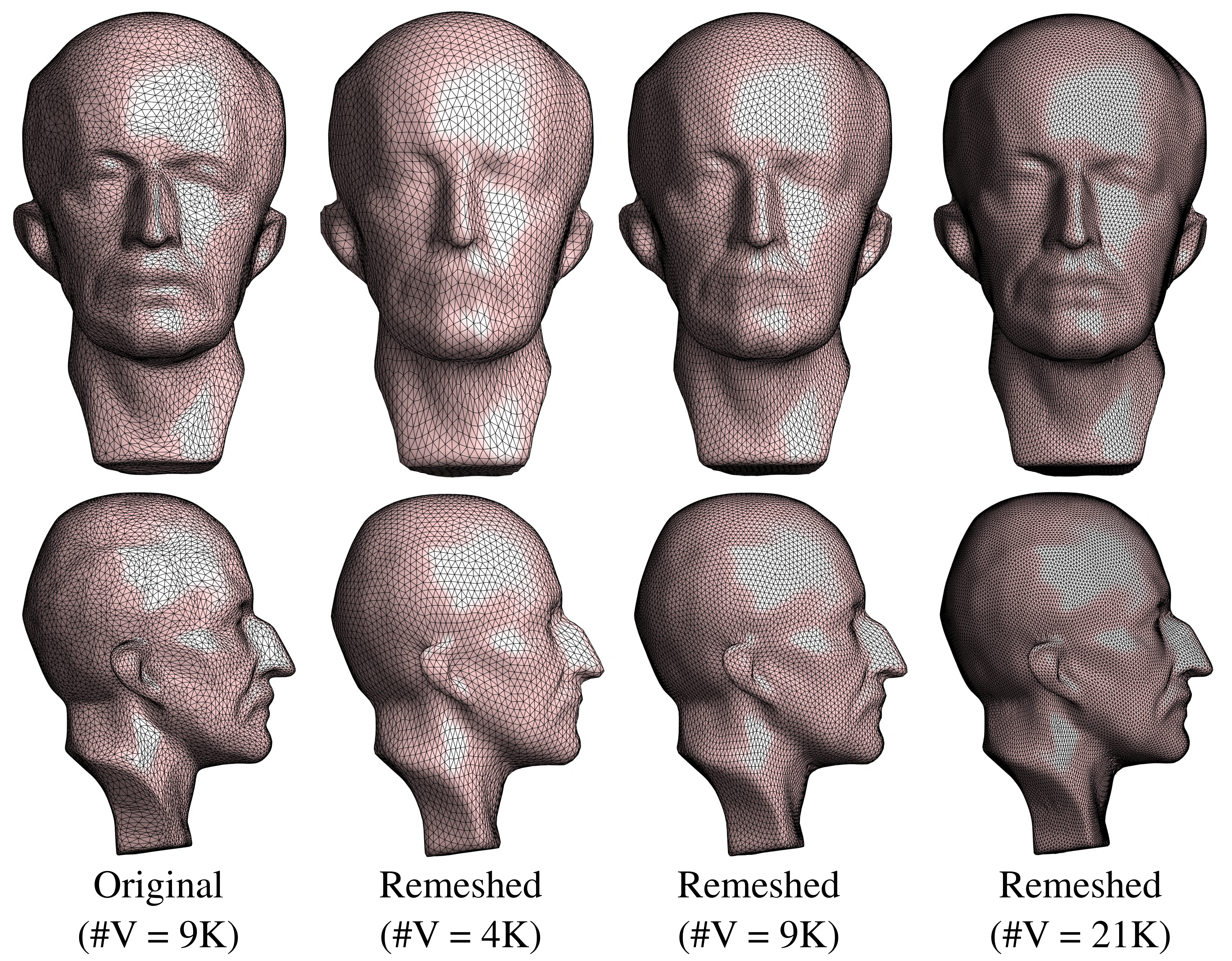}
    \caption{\textbf{Surface remeshing via our proposed SDEM method.} The first column shows the front view and side view of the original Max Planck Model. The second, third, and fourth columns show the front view and side view of three remeshed surfaces with different number of vertices obtained via SDEM.}
    \label{fig:remeshing}
\end{figure}

\subsection{Surface remeshing}
Using the proposed SDEM method, we can perform surface remeshing for genus-0 closed surfaces easily. More specifically, given a genus-0 closed surface $\mathcal{M}$ to be remeshed, we can first compute a spherical density-equalizing map $f: \mathcal{M} \to \mathbb{S}^2$. Then, we can generate a uniform mesh on the sphere using DistMesh~\cite{persson2004simple}. Using the inverse mapping $f^{-1}$, we can then map the uniform mesh structure on the sphere back onto the given surface $\mathcal{M}$, which gives a remeshed surface.

\begin{figure}[t]
    \centering
    \includegraphics[width=\textwidth]{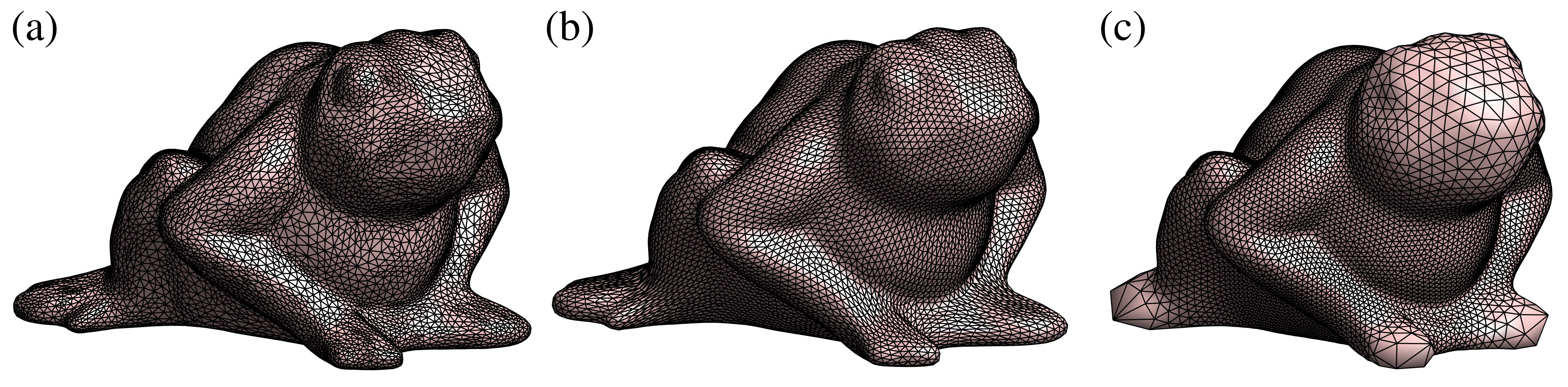}
    \caption{\textbf{A comparison between our SDEM method and conformal parameterization for surface remeshing.} (a) The original Frog model. (b) A remeshed surface obtained via spherical area-preserving parameterization by SDEM. (c) A remeshed surface obtained via spherical conformal parameterization~\cite{choi2015flash,choi2020parallelizable}. The two remeshed surfaces contain the same number of vertices and faces.}
    \label{fig:remeshing_comparison}
\end{figure}

In particular, by computing an area-preserving parameterization using SDEM, we can ensure that the mesh density will be largely uniform in the surface remeshing result. Fig.~\ref{fig:remeshing} shows several examples of remeshing the Max Planck model with different number of vertices. It can be observed that the mesh quality of the remeshed surfaces is very high, with the triangle elements being uniform in size and distribution. In Fig.~\ref{fig:remeshing_comparison}, we further compare our remeshing result with the remeshing result obtained using conformal parameterization~\cite{choi2015flash,choi2020parallelizable}. It can be observed that the triangle elements on the remeshed surface obtained via our method are much more uniform than those obtained via conformal parameterization, which can be explained by the fact that the conformal parameterizations preserve angles but may yield a large distortion in area. Moreover, the sharp features such as the head and the feet of the Frog model are much more well-preserved under our approach. This shows that our method is more advantageous for surface remeshing.

\begin{figure}[t]
    \centering
    \includegraphics[width=0.8\textwidth]{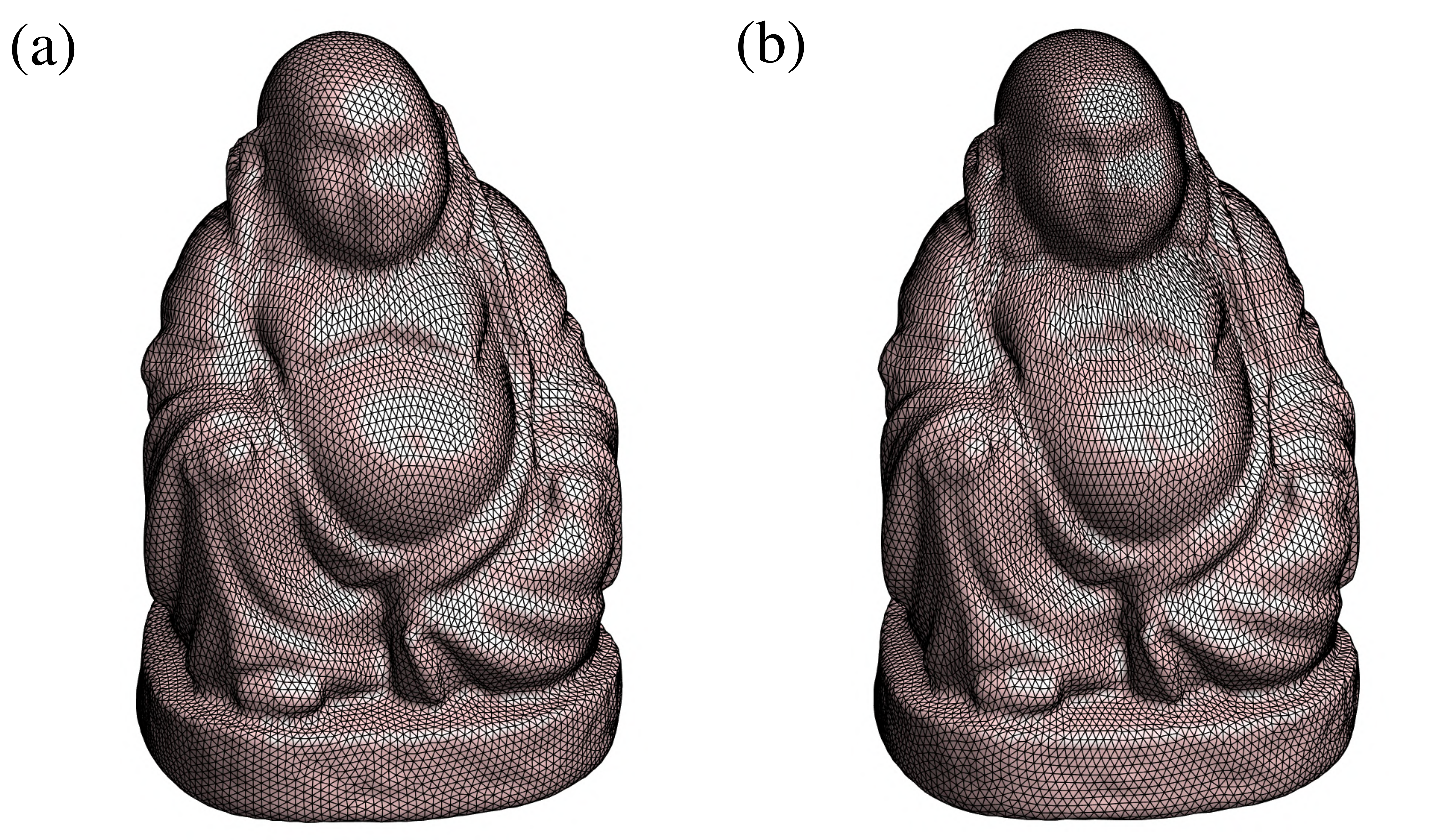}
    \caption{\textbf{Controlling the mesh density in surface remeshing via SDEM.} (a) A remeshed surface of the Buddha model obtained via spherical area-preserving parameterization by SDEM. (b) A remeshed surface of the Buddha model obtained via spherical density-equalizing map with the head part enlarged by SDEM. The two remeshed surfaces contain the same number of vertices and faces.}
    \label{fig:remeshing_enlarge}
\end{figure}

One can also control the mesh density at a certain region by setting an appropriate population in the computation of SDEM. Specifically, if a higher mesh density is desired at a certain region, we can set a larger population and run the SDEM algorithm. Then, the region will be enlarged in the resulting spherical parameterization. Consequently, if we generate a uniform mesh on the sphere using DistMesh and map the new mesh structure back onto the given surface, the mesh density at the corresponding region on the given surface will be higher than the other regions. As an example, Fig.~\ref{fig:remeshing_enlarge}(a) shows a remeshed surface of the Buddha model obtained by computing a spherical area-preserving parameterization using SDEM, in which we can see that the triangle elements are uniform in size and distribution. By contrast, Fig.~\ref{fig:remeshing_enlarge}(b) shows another remeshed surface obtained by computing a spherical density-equalizing map with a larger population set at the head region in the SDEM computation. It can be observed that since the head region is enlarged in the resulting spherical parameterization, the mesh density at the head region of the new mesh structure induced by the inverse mapping will be naturally higher than that at other parts of the remeshed surface.

\subsection{Spherical data visualization}
Analogous to the traditional DEM method~\cite{gastner2004diffusion}, the SDEM method can be applied for visualizing sociological data on a spherical domain. Specifically, we can compute spherical density-equalizing maps on a spherical earth model based on certain given sociological data. The deformed earth models then provide us with an intuitive way to visualize, understand, and analyze the data. Moreover, as the mappings produced by the SDEM method are bijective, no regions will be overlapping in the deformed models. 

\begin{figure}[t]
    \centering
    \includegraphics[width=\textwidth]{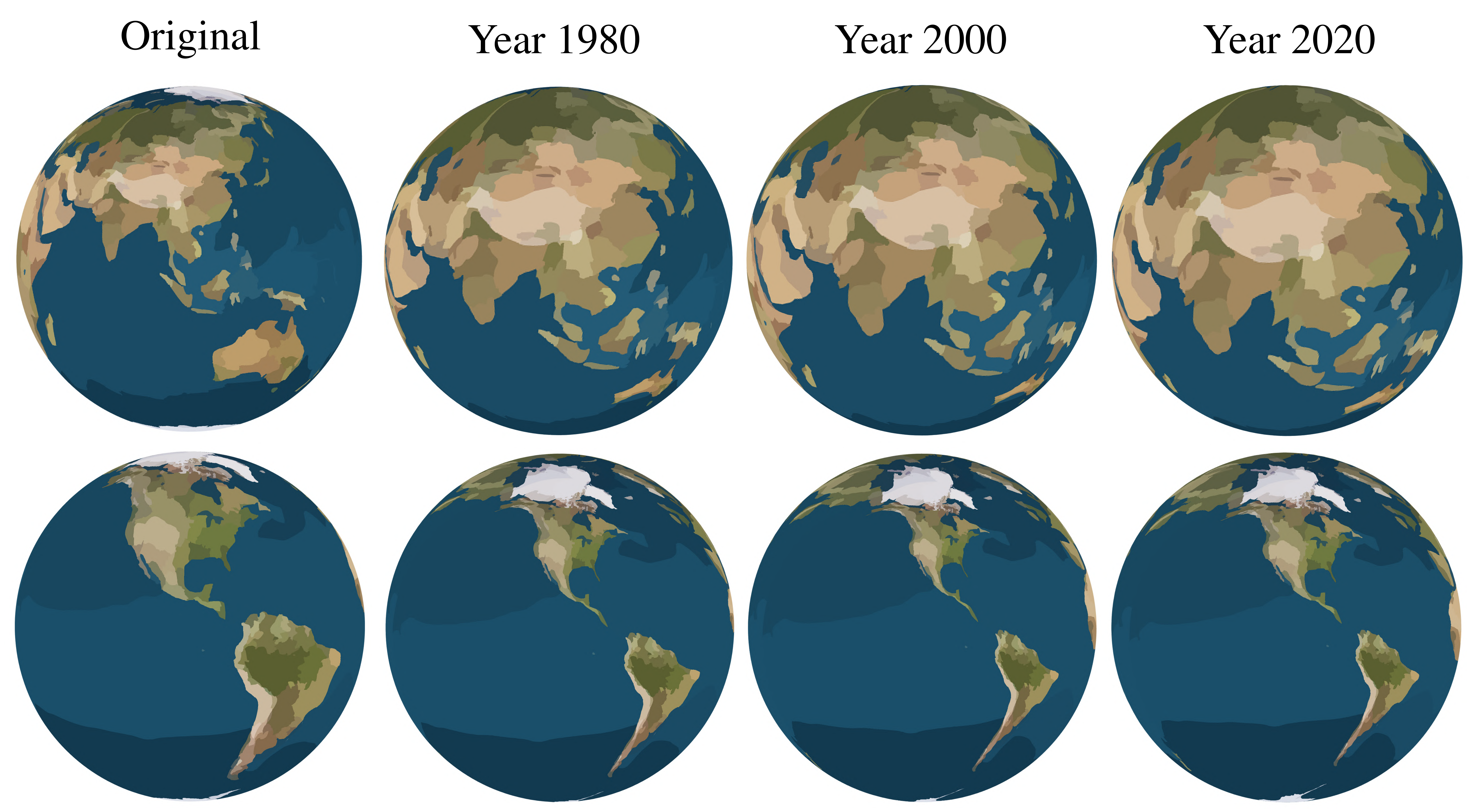}
    \caption{\textbf{Visualizing the change in the world population via SDEM.} The first column shows the undeformed earth model. The second, third, and fourth columns show the deformed earth models produced by SDEM based on the world population in Year 1980, 2000, and 2020, respectively. Two different views are displayed for each earth model (top row and bottom row).}
    \label{fig:visualization_population}
\end{figure}

To illustrate this idea, in Fig.~\ref{fig:visualization_population} we compute three spherical density-equalizing maps based on the world population in Year 1980, 2000, and 2020, respectively. More specifically, for each set of world population data in a specific year, we set the initial densities at different regions (Asia, Africa, North America, South America etc.) of the spherical earth model to be proportional to the actual human population in those regions. Also, we set the density at the sea to be the average of the density at all other regions. Then, under the proposed SDEM algorithm, regions with a higher actual human population will expand and those with a lower actual human population will shrink. As the human population in Asia is higher than that in North America and South America, we can see that the Asia region is magnified in all three mapping results in Fig.~\ref{fig:visualization_population}, while the North America and South America regions are shrunk. Moreover, by comparing the three mapping results for Year 1980, 2000, and 2020, it can be observed that the Asia region in the deformed earth models becomes larger and larger from Year 1980 to 2020. This reflects the change in demographics over the past several decades, with the overall human population in Asia increasing more rapidly than in North and South America.

\begin{figure}[t!]
    \centering
    \includegraphics[width=\textwidth]{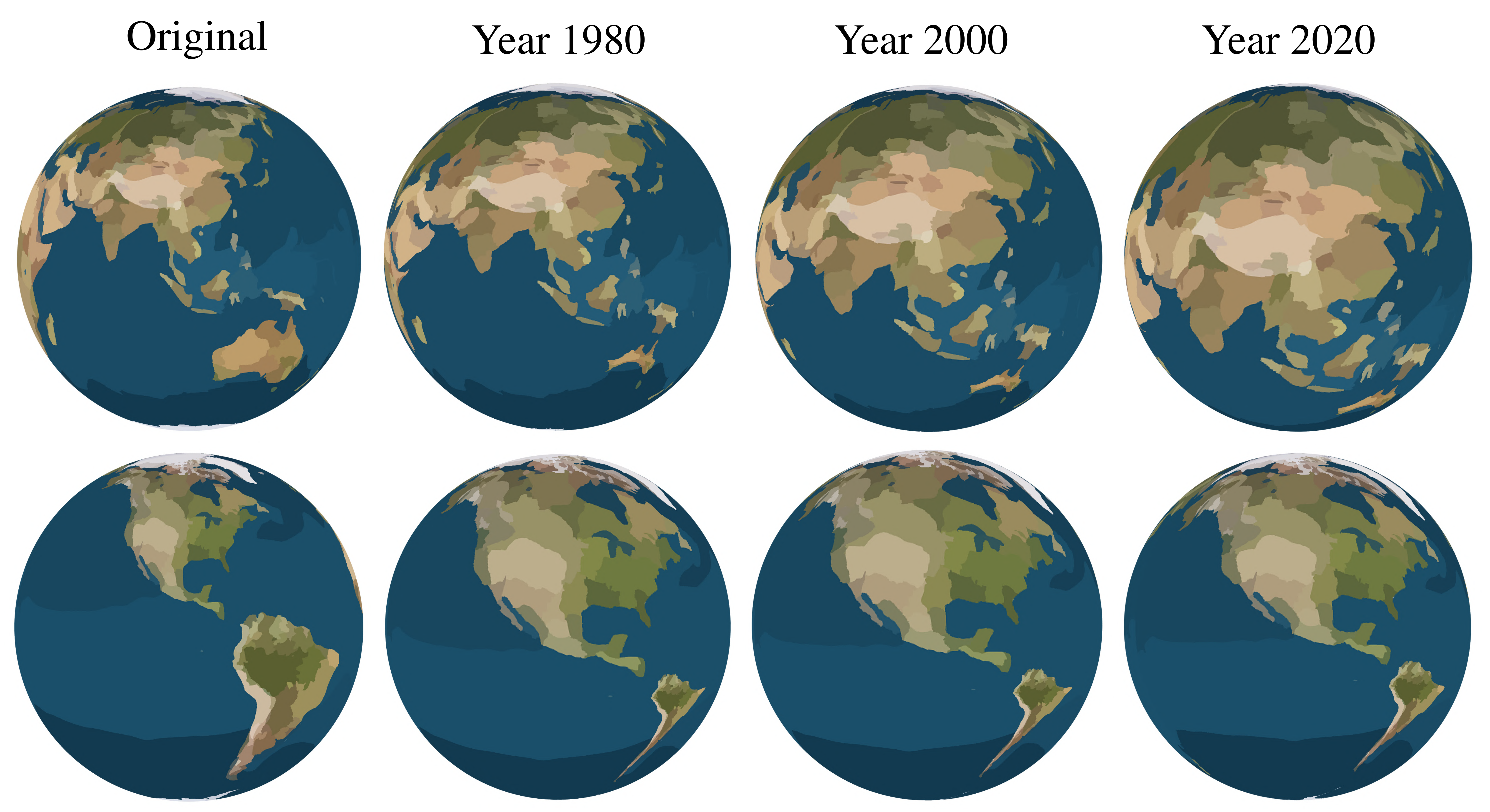}
    \caption{\textbf{Visualizing the change in the nominal gross domestic product (GDP) via SDEM.} The first column shows the undeformed earth model. The second, third, and fourth columns show the deformed earth models produced by SDEM based on the nominal GDP of different regions in Year 1980, 2000, and 2020, respectively. Two different views are displayed for each earth model (top row and bottom row).}
    \label{fig:visualization_gdp}
\end{figure}

In Fig.~\ref{fig:visualization_gdp}, we show another set of spherical density-equalizing maps computed based on the nominal gross domestic product (GDP) of different regions (Asia, Africa, North America, South America etc.) in Year 1980, 2000, and 2020, respectively. This time, we can see that North America is enlarged in all three mapping results, which shows that the nominal GDP of North America has remained relatively high over the past several decades. By contrast, the size of the Asia region shows an increasing trend in the three mapping results from Year 1980 to 2020, which reflects that economic growth in Asia has been emerging over the years. The two sets of examples above show that our method can be effectively applied to sociological data visualization.

 \section{Discussion}\label{sect:discussion}
In this work, we have proposed a novel method for computing spherical density-equalizing maps (SDEM) for genus-0 closed surfaces. Using the proposed method, we can easily achieve different mapping effects, with a particular example being spherical area-preserving parameterizations. We have further proposed a combined energy model including density gradient, harmonic energy, and landmark mismatch energy to achieve landmark-aligned spherical density-equalizing maps (LSDEM) balancing different distortions. The experimental results and applications presented have demonstrated the effectiveness of our proposed methods.

Note that both of the two proposed mapping methods are limited to surfaces with spherical topology. In the future, we plan to extend the methods for surfaces with other topologies. In particular, as many real-world surfaces are high-genus, having both a density-equalizing mapping method and a landmark-aligned density-equalizing mapping method for them would facilitate their processing and shape analysis. Also, note that the proposed mapping methods in the current work are only applicable to triangle mesh data. Another possible future direction is to extend the proposed methods for point cloud surfaces.

\bibliographystyle{ieeetr}
\bibliography{sphericalDEMbib.bib}

\begin{thebibliography}{10}

\bibitem{gu2004genus}
X.~Gu, Y.~Wang, T.~F. Chan, P.~M. Thompson, and S.-T. Yau, ``Genus zero surface
  conformal mapping and its application to brain surface mapping,'' {\em IEEE
  Trans. Med. Imaging}, vol.~23, no.~8, pp.~949--958, 2004.

\bibitem{chen2013ricci}
X.~Chen, H.~He, G.~Zou, X.~Zhang, X.~Gu, and J.~Hua, ``Ricci flow-based
  spherical parameterization and surface registration,'' {\em Comput. Vis.
  Image Underst.}, vol.~117, no.~9, pp.~1107--1118, 2013.

\bibitem{crane2013robust}
K.~Crane, U.~Pinkall, and P.~Schr{\"o}der, ``Robust fairing via conformal
  curvature flow,'' {\em ACM Trans. Graph.}, vol.~32, no.~4, pp.~1--10, 2013.

\bibitem{choi2015flash}
P.~T. Choi, K.~C. Lam, and L.~M. Lui, ``{FLASH}: Fast landmark aligned
  spherical harmonic parameterization for genus-0 closed brain surfaces,'' {\em
  SIAM J. Imaging Sci.}, vol.~8, no.~1, pp.~67--94, 2015.

\bibitem{choi2016spherical}
G.~P.-T. Choi, K.~T. Ho, and L.~M. Lui, ``Spherical conformal parameterization
  of genus-0 point clouds for meshing,'' {\em SIAM J. Imaging Sci.}, vol.~9,
  no.~4, pp.~1582--1618, 2016.

\bibitem{yueh2017efficient}
M.-H. Yueh, W.-W. Lin, C.-T. Wu, and S.-T. Yau, ``An efficient energy
  minimization for conformal parameterizations,'' {\em J. Sci. Comput.},
  vol.~73, no.~1, pp.~203--227, 2017.

\bibitem{choi2020parallelizable}
G.~P.~T. Choi, Y.~Leung-Liu, X.~Gu, and L.~M. Lui, ``Parallelizable global
  conformal parameterization of simply-connected surfaces via partial
  welding,'' {\em SIAM J. Imaging Sci.}, vol.~13, no.~3, pp.~1049--1083, 2020.

\bibitem{liao2022convergence}
W.-H. Liao, T.-M. Huang, W.-W. Lin, and M.-H. Yueh, ``Convergence analysis of
  {D}irichlet energy minimization for spherical conformal parameterizations,''
  {\em J. Sci. Comput.}, vol.~98, no.~29, pp.~1--28, 2024.

\bibitem{angenent2000area}
S.~Angenent, S.~Haker, A.~Tannenbaum, and R.~Kikinis, ``On area preserving
  mappings of minimal distortion,'' in {\em Syst. Theory}, pp.~275--286,
  Springer, 2000.

\bibitem{zou2011authalic}
G.~Zou, J.~Hu, X.~Gu, and J.~Hua, ``Authalic parameterization of general
  surfaces using {L}ie advection,'' {\em IEEE Trans. Vis. Comput. Graph.},
  vol.~17, no.~12, pp.~2005--2014, 2011.

\bibitem{su2013area}
Z.~Su, W.~Zeng, R.~Shi, Y.~Wang, J.~Sun, and X.~Gu, ``Area preserving brain
  mapping,'' in {\em Proceedings of the IEEE Conference on Computer Vision and
  Pattern Recognition}, pp.~2235--2242, 2013.

\bibitem{pumarola20193dpeople}
A.~Pumarola, J.~Sanchez-Riera, G.~P.~T. Choi, A.~Sanfeliu, and
  F.~Moreno-Noguer, ``{3DPeople}: Modeling the geometry of dressed humans,'' in
  {\em Proceedings of the IEEE International Conference on Computer Vision},
  pp.~2242--2251, 2019.

\bibitem{cui2019spherical}
L.~Cui, X.~Qi, C.~Wen, N.~Lei, X.~Li, M.~Zhang, and X.~Gu, ``Spherical optimal
  transportation,'' {\em Comput. Aided Des.}, vol.~115, pp.~181--193, 2019.

\bibitem{yueh2019novel}
M.-H. Yueh, T.~Li, W.-W. Lin, and S.-T. Yau, ``A novel algorithm for
  volume-preserving parameterizations of 3-manifolds,'' {\em SIAM J. Imaging
  Sci.}, vol.~12, no.~2, pp.~1071--1098, 2019.

\bibitem{lui2007landmark}
L.~M. Lui, Y.~Wang, T.~F. Chan, and P.~Thompson, ``Landmark constrained genus
  zero surface conformal mapping and its application to brain mapping
  research,'' {\em Applied Numerical Mathematics}, vol.~57, no.~5-7,
  pp.~847--858, 2007.

\bibitem{nadeem2016spherical}
S.~Nadeem, Z.~Su, W.~Zeng, A.~Kaufman, and X.~Gu, ``Spherical parameterization
  balancing angle and area distortions,'' {\em IEEE Trans. Vis. Comput.
  Graph.}, vol.~23, no.~6, pp.~1663--1676, 2016.

\bibitem{choi2016fast}
G.~P.-T. Choi, M.~H.-Y. Man, and L.~M. Lui, ``Fast spherical quasiconformal
  parameterization of genus-0 closed surfaces with application to adaptive
  remeshing,'' {\em Geom. Imaging Comput.}, vol.~3, no.~1--2, pp.~1--29, 2016.

\bibitem{wang2016bijective}
C.~Wang, X.~Hu, X.~Fu, and L.~Liu, ``Bijective spherical parametrization with
  low distortion,'' {\em Comput. Graph.}, vol.~58, pp.~161--171, 2016.

\bibitem{wang2018novel}
Z.~Wang, Z.~Luo, J.~Zhang, and E.~Saucan, ``A novel local/global approach to
  spherical parameterization,'' {\em J. Comput. Appl. Math.}, vol.~329,
  pp.~294--306, 2018.

\bibitem{lyu2023two}
Z.~Lyu, Q.~Chen, and L.~M. Lui, ``A two-stage algorithm for combined
  quasiconformal and optimal mass transportation spherical parameterization,''
  {\em Math. Comput. Geom. Data}, vol.~3, no.~1, pp.~29--57, 2023.

\bibitem{gastner2004diffusion}
M.~T. Gastner and M.~E.~J. Newman, ``Diffusion-based method for producing
  density-equalizing maps,'' {\em Proc. Natl. Acad. Sci.}, vol.~101, no.~20,
  pp.~7499--7504, 2004.

\bibitem{tobler2004thirty}
W.~Tobler, ``Thirty five years of computer cartograms,'' {\em Ann. Am. Assoc.
  Geogr.}, vol.~94, no.~1, pp.~58--73, 2004.

\bibitem{wake2008we}
D.~B. Wake and V.~T. Vredenburg, ``Are we in the midst of the sixth mass
  extinction? {A} view from the world of amphibians,'' {\em Proc. Natl. Acad.
  Sci.}, vol.~105, no.~Supplement 1, pp.~11466--11473, 2008.

\bibitem{glynn2010breast}
R.~W. Glynn, C.~Scutaru, M.~J. Kerin, and K.~J. Sweeney, ``Breast cancer
  research output, 1945-2008: a bibliometric and density-equalizing analysis,''
  {\em Breast Cancer Res.}, vol.~12, no.~6, p.~R108, 2010.

\bibitem{pratt2012implications}
M.~Pratt, O.~L. Sarmiento, F.~Montes, D.~Ogilvie, B.~H. Marcus, L.~G. Perez,
  R.~C. Brownson, and {the Lancet Physical Activity Series Working Group},
  ``The implications of megatrends in information and communication technology
  and transportation for changes in global physical activity,'' {\em Lancet},
  vol.~380, no.~9838, pp.~282--293, 2012.

\bibitem{matthews2014national}
H.~D. Matthews, T.~L. Graham, S.~Keverian, C.~Lamontagne, D.~Seto, and T.~J.
  Smith, ``National contributions to observed global warming,'' {\em Environ.
  Res. Lett.}, vol.~9, no.~1, p.~014010, 2014.

\bibitem{nusrat2016state}
S.~Nusrat and S.~Kobourov, ``The state of the art in cartograms,'' {\em Comput.
  Graph. Forum}, vol.~35, no.~3, pp.~619--642, 2016.

\bibitem{dodd2016global}
P.~J. Dodd, C.~Sismanidis, and J.~A. Seddon, ``Global burden of drug-resistant
  tuberculosis in children: a mathematical modelling study,'' {\em Lancet
  Infect. Dis.}, vol.~16, no.~10, pp.~1193--1201, 2016.

\bibitem{gastner2018fast}
M.~T. Gastner, V.~Seguy, and P.~More, ``Fast flow-based algorithm for creating
  density-equalizing map projections,'' {\em Proc. Natl. Acad. Sci.}, vol.~115,
  no.~10, pp.~E2156--E2164, 2018.

\bibitem{li2018diffusion}
Z.~Li and S.~Aryana, ``Diffusion-based cartogram on spheres,'' {\em Cartogr.
  Geogr. Inf. Sci.}, vol.~45, no.~5, pp.~464--475, 2018.

\bibitem{li2019visualization}
Z.~Li and S.~A. Aryana, ``Visualization of subsurface data using
  three-dimensional cartograms,'' in {\em Advances in Remote Sensing and Geo
  Informatics Applications: Proceedings of the 1st Springer Conference of the
  Arabian Journal of Geosciences (CAJG-1), Tunisia 2018}, pp.~17--19, Springer,
  2019.

\bibitem{choi2021volumetric}
G.~P.~T. Choi and C.~H. Rycroft, ``Volumetric density-equalizing reference maps
  with applications,'' {\em J. Sci. Comput.}, vol.~86, no.~3, p.~41, 2021.

\bibitem{choi2018density}
G.~P.~T. Choi and C.~H. Rycroft, ``Density-equalizing maps for simply connected
  open surfaces,'' {\em SIAM J. Imaging Sci.}, vol.~11, no.~2, pp.~1134--1178,
  2018.

\bibitem{choi2020area}
G.~P.~T. Choi, B.~Chiu, and C.~H. Rycroft, ``Area-preserving mapping of {3D}
  carotid ultrasound images using density-equalizing reference map,'' {\em IEEE
  Trans. Biomed. Eng.}, vol.~67, no.~9, pp.~1507--1517, 2020.

\bibitem{lyu2023bijective}
Z.~Lyu, G.~P.~T. Choi, and L.~M. Lui, ``Bijective density-equalizing
  quasiconformal map for multiply-connected open surfaces,'' {\em SIAM J.
  Imaging Sci.}, to appear.

\bibitem{lehto1973quasiconformal}
O.~Lehto and K.~I. Virtanen, {\em Quasiconformal mappings in the plane},
  vol.~126.
\newblock Springer Berlin, Heidelberg, 1973.

\bibitem{ahlfors2006lectures}
L.~V. Ahlfors, {\em Lectures on quasiconformal mappings}, vol.~38.
\newblock American Mathematical Society, Providence, RI, 2006.

\bibitem{pinkall1993computing}
U.~Pinkall and K.~Polthier, ``Computing discrete minimal surfaces and their
  conjugates,'' {\em Exp. Math.}, vol.~2, no.~1, pp.~15--36, 1993.

\bibitem{lui2013texture}
L.~M. Lui, K.~C. Lam, T.~W. Wong, and X.~Gu, ``Texture map and video
  compression using {B}eltrami representation,'' {\em SIAM J. Imaging Sci.},
  vol.~6, no.~4, pp.~1880--1902, 2013.

\bibitem{common}
A.~Jacobson, ``Common {3D} test models.''
  \url{https://github.com/alecjacobson/common-3d-test-models}, 2023.
\newblock Accessed: 2023-10-03.

\bibitem{persson2004simple}
P.-O. Persson and G.~Strang, ``A simple mesh generator in {MATLAB},'' {\em SIAM
  Review}, vol.~46, no.~2, pp.~329--345, 2004.

\end{thebibliography}
\end{document}